\documentclass{article}
\usepackage{romp}

\usepackage{amsmath,amsthm,amssymb}

\usepackage{graphicx}

\def\im{\mathrm i}
\def\e{\mathrm e}

\title{ Gaps in the spectrum of a cuboidal periodic lattice graph }
\author{ Ond\v{r}ej Turek\thanks{ The author thanks Pavel Exner for valuable comments and discussions. The research was supported by the Czech Science Foundation (GA\v{C}R) within the project 17-01706S. }
                      \\ Department of Mathematics, Faculty of Science, University of Ostrava, \\ 30.~dubna 22, 701 03 Ostrava, Czech Republic, \\[.3em]
											Nuclear Physics Institute, Czech Academy of Sciences, \\ Hlavn\'{i} 130, 250 68 \v{R}e\v{z}, Czech Republic, \\[.3em]
											Laboratory for Unified Quantum Devices, Kochi University of Technology, \\ 185 Miyanokuchi, Tosayamada, Kami-shi, Kochi 782-8502, Japan
											\\ e-mail: ondrej.turek@osu.cz }

\begin{document}

\maketitle
\begin{abstract}
     We locate gaps in the spectrum of a Hamiltonian on a periodic cuboidal (and generally hyperrectangular) lattice graph with $\delta$ couplings in the vertices. We formulate sufficient conditions under which the number of gaps is finite. As the main result, we find a connection between the arrangement of the gaps and the coefficients in a continued fraction associated with the ratio of edge lengths of the lattice. This knowledge enables a straightforward construction of a periodic quantum graph with any required number of spectral gaps and---to some degree---to control their positions; i.e., to partially solve the inverse spectral problem.
\end{abstract}

\noindent
{\bf Keywords:} quantum graph, periodic lattice, Bethe--Sommerfeld conjecture, irrational number, continued fraction

\section{Introduction}

A quantum graph is a pair $(\Gamma,H)$, where $\Gamma$ is a metric graph and $H$ is a Hamiltonian on $\Gamma$. An intensive study of quantum graphs began in the 1980s along with the technological progress achieved in manufacturing nanosized graph-like objects, for which quantum graphs are suitable models. Besides, quantum graphs are convenient objects for illustrating various quantum effects in a simple setting.
These facts attracted the attention of researchers to the subject and led to its fast development during the last three decades. A detailed exposition of both the theory and applications of quantum graphs, together with an extensive list of relevant literature, can be found in monograph~\cite{BK13}.

One of the most important characteristics of any quantum system is the spectrum of its Hamiltonian, which is often referred to simply as ``spectrum of the system''. The present paper is concerned with spectral properties of periodic quantum graphs.
Not surprisingly, periodic graphs have band--gap spectra as other periodic quantum systems. However, they show an anomaly concerning the number of gaps. The Bethe--Sommerfeld conjecture of 1933 \cite{BS33} says that for every manifold that is periodic in more than one dimension there is a threshold such that all gaps in the spectrum lie below that value; therefore, the number of gaps is bounded. The conjecture was proved for manifolds of dimension $2$ or more \cite{Sk79, Sk85, DT82, HM98, Pa08}, but its statement turned out to be invalid for quantum graphs, which are $1$-dimensional manifolds.\footnote{Let us emphasize that a metric graph, even if being periodic in several dimensions, is a manifold of dimension $1$. For example, the cuboidal lattice graph in Figure~\ref{Mrizka} is a manifold of dimension $1$ that is periodic in three dimensions.} There are numerous counterexamples of graphs that have infinitely many gaps in the spectrum. Such systems can be constructed using the idea of ``graph decorations'', which was first introduced for combinatorial graphs~\cite{SA00} and then extended to metric graphs~\cite{Ku05}. Also, a use of various exotic couplings in graph vertices can generate an infinite series of gaps in the spectrum~\cite{ET10}.

More interestingly, for decades there was not a single known example of a quantum graph that obeys the Bethe--Sommerfeld rule in the sense of having a nonzero finite number of gaps. Indeed, every periodic quantum graph studied in the literature proved to have either infinitely many gaps, or no gaps at all.
The very existence of a periodic quantum graph featuring a nonzero finite number of gaps was an open problem until 2017, when an example of a graph with this property was explicitly constructed \cite{ET17,ET17b}. The graph had a form of a planar rectangular lattice supporting $\delta$ couplings in the vertices, with the edge lengths and coupling strength carefully adjusted. Its construction was inspired by achievements of 1990s~\cite{Ex95,Ex96,EG96}.

In view of the result \cite{ET17}, it is a natural to ask about spectral properties of quantum graphs periodic in three dimensions, such as cuboidal lattices. It is readily seen that there are quantum graphs with $3$-dimensional periodicity that have infinitely many gaps in the spectrum, as well as graphs with no gaps at all (we will provide explicit examples in Sections~\ref{Sect.SC} and \ref{Sect.gaps}, respectively). The interesting case thus concerns the existence of a periodic quantum graph for which the number of gaps in the spectrum is finite but nonzero.

\begin{figure}[h]
\begin{center}
\includegraphics[angle=0,width=0.4\textwidth]{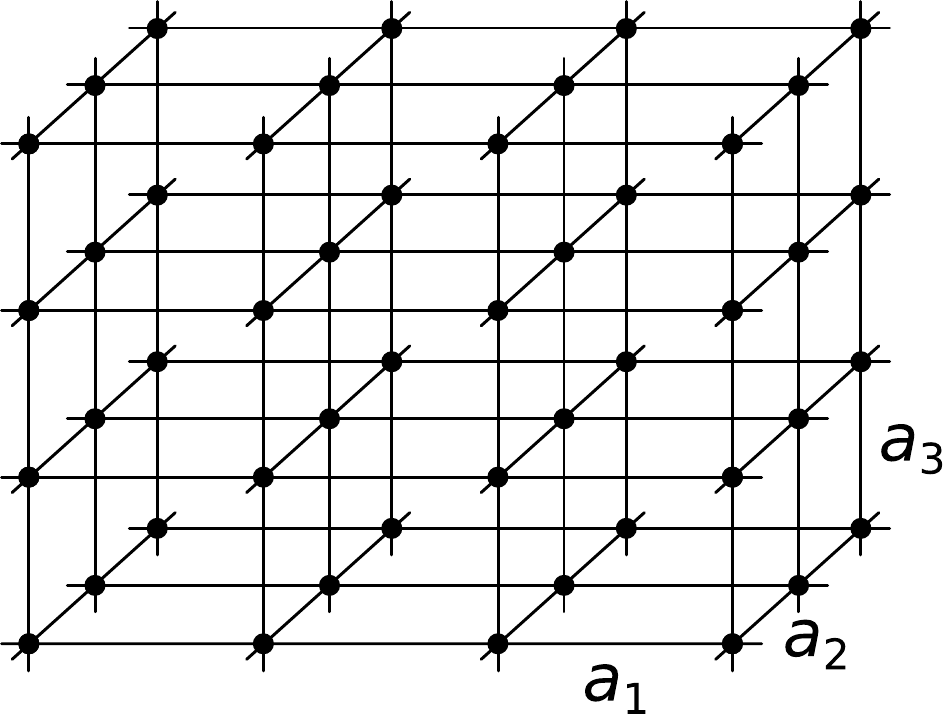}
\caption{Cuboidal lattice graph with edge lengths $a_1,a_2,a_3$.}\label{Mrizka}
\end{center}
\end{figure}

With a view to establish the existence of such graph and to design an explicit example, we will consider a cuboidal periodic lattice, see Figure~\ref{Mrizka}.
The lattice is parametrized by the edge lengths and by the strength of the $\delta$ couplings in the vertices. It will be shown in Section~\ref{Incommensurable} that the number of gaps is closely related to the character of irrationality of edge lengths ratios. The effect of irrationality is studied in more detail in Section~\ref{Section continued fractions}, where a connection between the continued fraction expansion of the edge length ratios and the arrangement of the gaps is revealed. This explicit result enables a solution of the inverse spectral problem in the sense of constructing a periodic quantum graph featuring a prescribed number of spectral gaps along with the possibility to partially control the distances between the gaps. Our construction is the first example to date of a three-dimensional periodic graph with such property.

\section{Spectral condition}\label{Sect.SC}

Consider a periodic lattice with a hyperrectangular cell having edge lengths $a_1,\ldots,a_d$. A choice $d=2$ corresponds to a planar rectangular lattice, $d=3$ corresponds to the cuboidal lattice, depicted in Figure~\ref{Mrizka}. Values $d\geq4$ have no such a simple interpretation, but we will use a parameter $d$ during the calculations instead of a concrete number for the sake of generality. We assume free motion of a particle along the edges and a presence of the $\delta$ couplings of strength $\alpha$ in the vertices.

The aim of this section is to derive the spectral condition for the system, i.e., the equation determining the spectrum. With regard to the periodicity, the Floquet--Bloch decomposition will be applied. Figure~\ref{Bunka} shows the elementary cell of the lattice, together with the notation for eigenvalue components that we will use in the sequel. The figure is plotted for the case $d=3$, but the situation in other dimensions is analogous.
\begin{figure}[h]
\begin{center}
\includegraphics[angle=0,width=0.3\textwidth]{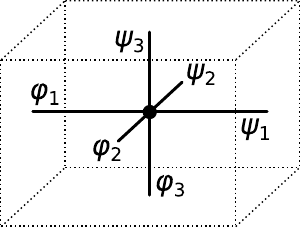}
\caption{The elementary cell of a cuboidal lattice graph. The domain of $\varphi_j$ is $[-a_j/2,0]$, the domain of $\psi_j$ is $[0,a_j/2]$ for every $j$.}\label{Bunka}
\end{center}
\end{figure}

Since no additional potentials on the edges are assumed, the Hamiltonian acts as $\psi\mapsto-\psi''$ on each wavefunction component (let us adhere to the usual convention $\hbar=2m=1$ for simplicity). The Schr\"odinger equation at energy $E=k^2>0$ requires the wavefunction to satisfy $-\psi''=k^2\psi$ on every edge of a graph. Therefore, all the eigenfunction components for a given $E=k^2$ are linear combinations of $\e^{\im kx}$ and $\e^{-\im kx}$. This applies also to the edges constituting the elementary cell, where the wavefunction components can be written as
\begin{equation}\label{vlnfce}
\begin{split}
\psi_j(x)&=C_j^+\e^{\im k x}+C_j^-\e^{-\im k x},\quad x\in[0,a_j/2]\,;\\
\varphi_j(x)&=D_j^+\e^{\im k x}+D_j^-\e^{-\im k x},\quad x\in[-a_j/2,0]
\end{split}
\end{equation}
for $j=1,\ldots,d$.
The $\delta$ coupling with parameter $\alpha\in\mathbb{R}$ in the vertex is represented by boundary conditions
\begin{gather}
\psi_1(0)=\varphi_1(0)=\cdots=\psi_d(0)=\varphi_d(0)\,; \label{delta1}\\
\sum_{j=1}^d(\psi_j'(0)-\varphi_j'(0))=\alpha\cdot\psi_1(0)\,, \label{delta2}
\end{gather}
where the left hand side of equation \eqref{delta2} is the sum of (limits of) derivatives taken in the outgoing sense. Substituting \eqref{vlnfce} into \eqref{delta1} and \eqref{delta2}, one obtains
\begin{gather}
C_1^++C_1^-=D_1^++D_1^-=\cdots=C_d^++C_d^-=D_d^++D_d^-\,; \label{spojitost}\\
\im k\sum_{j=1}^d(C_j^+-C_j^--D_j^++D_j^-)=\alpha(C_1^++C_1^-)\,. \label{sum of der.}
\end{gather}
The Floquet--Bloch decomposition is implemented by imposing conditions
\begin{gather*}
\psi_j(a_j/2)=\e^{\im\theta_j}\varphi_j(-a_j/2)\,;\\
\psi_j'(a_j/2)=\e^{\im\theta_j}\varphi_j'(-a_j/2)
\end{gather*}
for some $\theta_j\in(-\pi,\pi]$, $j=1,\ldots,d$; i.e.,
\begin{gather*}
C_j^+\e^{\im k a_j/2}+C_j^-\e^{-\im k a_j/2}=\e^{\im\theta_j}\left(D_j^+\e^{-\im k a_j/2}+D_j^-\e^{\im k a_j/2}\right); \\
\im k\left(C_j^+\e^{\im k a_j/2}-C_j^-\e^{-\im k a_j/2}\right)=\e^{\im\theta_j}\im k\left(D_j^+\e^{-\im k a_j/2}-D_j^-\e^{\im k a_j/2}\right).
\end{gather*}
Hence
\begin{equation}\label{D pomoci C}
D_j^+=\e^{\im (ka_j-\theta_j)}C_j^+ \qquad\text{and}\qquad
D_j^-=\e^{\im (-ka_j-\theta_j)}C_j^-
\end{equation}
for all $j=1,\ldots,d$. Plugging \eqref{D pomoci C} into equation $C_j^++C_j^-=D_j^++D_j^-$, cf.~\eqref{spojitost}, one gets
\begin{equation}\label{C- pomoci C+}
C_j^-=\frac{\e^{\im (ka_j-\theta_j)}-1}{1-\e^{\im (-ka_j-\theta_j)}}C_j^+\,.
\end{equation}
Now we use once again \eqref{spojitost}, this time the equation $C_j^++C_j^-=C_1^++C_1^-$, combining it with~\eqref{C- pomoci C+} in order to express $C_j^+$ in terms of $C_1^+$. This gives
\begin{equation}\label{C_j pomoci C_1}
C_j^+=\frac{\e^{\im (ka_1-\theta_1)}-\e^{\im (-ka_1-\theta_1)}}{1-\e^{\im (-ka_1-\theta_1)}}\cdot\frac{1-\e^{\im (-ka_j-\theta_j)}}{\e^{\im (ka_j-\theta_j)}-\e^{\im (-ka_j-\theta_j)}}C_1^+
\end{equation}
for $j=2,\ldots,d$.
Equations~\eqref{C_j pomoci C_1} together with \eqref{C- pomoci C+} and \eqref{D pomoci C} allow one to express all coefficients $C_j^+$, $C_j^-$, $D_j^+$, $D_j^-$ in terms of $C_1^+$. When those expressions are substituted into \eqref{sum of der.}, a simple manipulation leads to the following condition:
\begin{equation}\label{sp cond theta}
\sum_{j=1}^d\frac{\cos\theta_j-\cos ka_j}{\sin ka_j}=\frac{\alpha}{2k}\;.
\end{equation}
According to the Floquet--Bloch theory (cf.\ Sect.~4.3.2 and Thm.~4.3.1 in~\cite{BK13}), $k^2>0$ belongs to the spectrum if and only if there exists a $d$-tuple $(\theta_1,\ldots,\theta_d)\in(-\pi,\pi]^d$ such that \eqref{sp cond theta} is satisfied. This is obviously so if and only if
\begin{equation}\label{sp cond}
\sum_{j=1}^d\min_{\theta_j\in(-\pi,\pi]}\frac{\cos\theta_j-\cos ka_j}{\sin ka_j}\leq\frac{\alpha}{2k}\leq\sum_{j=1}^d\max_{\theta_j\in(-\pi,\pi]}\frac{\cos\theta_j-\cos ka_j}{\sin ka_j}\,.
\end{equation}
It is easy to check that for any $x\in\mathbb{R}$,
\begin{align*}
\min_{\theta\in(-\pi,\pi]}\frac{\cos\theta-\cos x}{\sin x}=&-\cot\left(\frac{x}{2}-\left\lfloor\frac{x}{\pi}\right\rfloor\frac{\pi}{2}\right)=-\tan\left(\left\lceil\frac{x}{\pi}\right\rceil\frac{\pi}{2}-\frac{x}{2}\right)\leq0\,; \\
\max_{\theta\in(-\pi,\pi]}\frac{\cos\theta-\cos x}{\sin x}=&\tan\left(\frac{x}{2}-\left\lfloor\frac{x}{\pi}\right\rfloor\frac{\pi}{2}\right)\geq0\,.
\end{align*}
Hence one gets the spectral conditions as follows:
\begin{equation}\label{alpha>0}
\sum_{j=1}^d\tan\left(\frac{ka_j}{2}-\frac{\pi}{2}\left\lfloor\frac{ka_j}{\pi}\right\rfloor\right)\geq\frac{\alpha}{2k} \;\qquad \text{for}\; \alpha>0\,;
\end{equation}
\begin{equation}\label{alpha<0}
\sum_{j=1}^d\tan\left(\frac{\pi}{2}\left\lceil\frac{ka_j}{\pi}\right\rceil-\frac{ka_j}{2}\right)\geq\frac{|\alpha|}{2k} \;\qquad \text{for}\; \alpha<0\,.
\end{equation}
Note that if $\alpha=0$ (Kirchhoff boundary conditions, which represent free motion in the vertices), condition~\eqref{sp cond} is satisfied for every $k>0$. Hence we infer:
\begin{proposition}{Proposition}\label{Prop. alpha=0}
If $\alpha=0$, the spectrum of $H$ has no gaps.
\end{proposition}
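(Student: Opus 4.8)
The plan is to read off the claim directly from the spectral condition~\eqref{sp cond}. Setting $\alpha=0$ makes the central quantity $\frac{\alpha}{2k}$ vanish for every $k>0$, so it suffices to show that $0$ lies between the two sums in~\eqref{sp cond}, i.e.\ that
\begin{equation*}
\sum_{j=1}^d\min_{\theta_j\in(-\pi,\pi]}\frac{\cos\theta_j-\cos ka_j}{\sin ka_j}\leq 0\leq\sum_{j=1}^d\max_{\theta_j\in(-\pi,\pi]}\frac{\cos\theta_j-\cos ka_j}{\sin ka_j}
\end{equation*}
whenever all the summands are defined, that is, whenever $\sin ka_j\neq0$ for every $j$.

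First I would invoke the termwise sign information recorded just above the proposition: each minimum is $\leq0$ and each maximum is $\geq0$. Summing over $j$ preserves the signs, so the left-hand sum is nonpositive and the right-hand sum is nonnegative; hence $0$ is bracketed by them and~\eqref{sp cond} holds. Equivalently, one may argue straight from~\eqref{sp cond theta}: since $\cos ka_j\in[-1,1]$ and cosine maps $(-\pi,\pi]$ onto $[-1,1]$, for each $j$ one can pick $\theta_j\in(-\pi,\pi]$ with $\cos\theta_j=\cos ka_j$, annihilating the $j$-th summand, so that the whole sum equals $0=\frac{\alpha}{2k}$. Either route shows that every $k>0$ with all $\sin ka_j\neq0$ yields a point $k^2$ of the spectrum.

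The only point needing extra care is the discrete set of $k$ at which $\sin ka_j=0$ for some $j$---precisely the values excluded when~\eqref{sp cond theta} was derived. Here I would use that the spectrum is closed while the set of admissible $k$ just shown to lie in it is dense in $(0,\infty)$; taking closures gives $[0,\infty)\subseteq\sigma(H)$, and since $\alpha=0$ renders $H$ nonnegative we also have $\sigma(H)\subseteq[0,\infty)$, whence $\sigma(H)=[0,\infty)$ and there are no gaps. I expect this closure step to be the only mild obstacle: the inequality itself is immediate from the stated signs, and all that remains is to confirm that the discrete set of degenerate energies cannot open a hole in the otherwise full spectrum.
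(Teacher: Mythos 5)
Your proposal is correct and follows essentially the same route as the paper, which simply observes that for $\alpha=0$ condition~\eqref{sp cond} holds for every $k>0$ because each minimum is $\leq0$ and each maximum is $\geq0$. Your additional closure argument for the exceptional points with $\sin ka_j=0$ is a reasonable extra precaution that the paper leaves implicit, but it does not change the approach.
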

For the sake of convenience, let us set
\begin{align}
F(k):=&2k\sum_{j=1}^d\tan\left(\frac{\pi}{2}\left(\frac{ka_j}{\pi}-\left\lfloor\frac{ka_j}{\pi}\right\rfloor\right)\right); \label{F} \\
G(k):=&2k\sum_{j=1}^d\tan\left(\frac{\pi}{2}\left(\left\lceil\frac{ka_j}{\pi}\right\rceil-\frac{ka_j}{\pi}\right)\right). \label{G}
\end{align}
Using~\eqref{F} and \eqref{G}, one can write the spectral conditions~\eqref{alpha>0} and \eqref{alpha<0}, respectively, in the way
\begin{align}
&k^2\in\sigma(H)\quad\Leftrightarrow\quad
F(k)\geq\alpha \qquad\;\; \text{if $\alpha>0$}\,; \label{Band cond.F}\\
&k^2\in\sigma(H)\quad\Leftrightarrow\quad
G(k)\geq|\alpha| \qquad \text{if $\alpha<0$}\,. \label{Band cond.G}
\end{align}

\section{Gaps in the spectrum}\label{Sect.gaps}

With regard to the spectral conditions~\eqref{Band cond.F} and \eqref{Band cond.G}, the gaps in the spectrum of $H$ are determined by the condition
\begin{align}
&k^2\notin\sigma(H)\quad\Leftrightarrow\quad
F(k)<\alpha \qquad\;\; \text{if $\alpha>0$}\,; \label{Gap cond.F}\\
&k^2\notin\sigma(H)\quad\Leftrightarrow\quad
G(k)<|\alpha| \qquad \text{if $\alpha<0$}\,, \label{Gap cond.G}
\end{align}
where $F(k)$ and $G(k)$ are functions introduced in equations~\eqref{F} and \eqref{G}, respectively.
Let us start from analyzing the function $F(k)$.

\begin{lemma}{Lemma}\label{Lemma F}
Function $F(k)$ on $(0,+\infty)$ has the following properties:
\begin{itemize}
\item[(i)] A $k>0$ is a discontinuity of $F$ if and only if $k=m\pi/a_j$ for some $a_j\in\{a_1,\ldots,a_d\}$, $m\in\mathbb{N}$.
\item[(ii)] $F(k)$ is strictly increasing in each interval of continuity.
\item[(iii)] For each $m\in\mathbb{N}$ and $a_j\in\{a_1,\ldots,a_d\}$,
$$
\lim_{k\nearrow\frac{m\pi}{a_j}}F(k)=+\infty\,.
$$
\end{itemize}
\end{lemma}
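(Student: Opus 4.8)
The plan is to read off all three properties from the single observation that each summand of $F$ depends on $k$ only through the fractional part of $ka_j/\pi$. Writing $F(k)=2k\sum_{j=1}^d g_j(k)$ with $g_j(k)=\tan\!\left(\frac{\pi}{2}\left(\frac{ka_j}{\pi}-\lfloor\frac{ka_j}{\pi}\rfloor\right)\right)$, I note that the prefactor $2k$ is smooth and strictly positive on $(0,+\infty)$, so all the qualitative features of $F$ are inherited from the sum $\sum_j g_j$. The argument of each tangent lies in $[0,\pi/2)$, hence $g_j(k)\geq0$ throughout and $g_j$ is finite away from the points where the fractional part jumps.

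For (i) and (iii) I would analyze the jumps of the fractional part. The quantity $\frac{ka_j}{\pi}-\lfloor\frac{ka_j}{\pi}\rfloor$ is continuous except when $ka_j/\pi\in\mathbb{N}$, i.e.\ $k=m\pi/a_j$; as $k\nearrow m\pi/a_j$ it tends to $1^-$, so $g_j(k)\to+\infty$, whereas at $k=m\pi/a_j$ its value is $\tan 0=0$. Thus each $g_j$—and therefore $F$—is discontinuous exactly at the points $m\pi/a_j$, which gives (i), while the same computation identifies the left limit as $+\infty$, giving (iii). The one point requiring a word of care is a coincidence $m\pi/a_j=m'\pi/a_{j'}$ for distinct indices: there several summands blow up simultaneously, but since they all diverge to $+\infty$ from the left they reinforce rather than cancel, so the conclusion is unaffected.

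For (ii) I would fix an interval of continuity $(k_1,k_2)$, on which every floor $\lfloor ka_j/\pi\rfloor=n_j$ is constant, so that $F(k)=2k\sum_j\tan\!\left(\frac{ka_j}{2}-\frac{n_j\pi}{2}\right)$ is differentiable there. Differentiating gives $F'(k)=2\sum_j\tan u_j+k\sum_j a_j\sec^2 u_j$, where $u_j=\frac{ka_j}{2}-\frac{n_j\pi}{2}\in[0,\pi/2)$. Since $\tan u_j\geq0$ and $a_j\sec^2 u_j>0$, every term is nonnegative and the second sum is strictly positive (recall $k>0$ and $a_j>0$), whence $F'(k)>0$ and $F$ is strictly increasing on $(k_1,k_2)$.

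I do not expect any genuine obstacle: the content is elementary once the fractional-part structure is isolated. The only steps demanding attention are tracking the signs in $F'$, so as to conclude strict positivity rather than mere monotonicity, and handling the degenerate case of coinciding discontinuities noted above.
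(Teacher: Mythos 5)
Your proof is correct and follows essentially the same route as the paper: decompose $F$ into the $d$ tangent summands, locate the discontinuities via the fractional part of $ka_j/\pi$, and read off the monotonicity and the left limits term by term. The only cosmetic difference is that you establish (ii) by differentiating on an interval of continuity, whereas the paper observes directly that each term $x\tan\left(\frac{\pi}{2}(x-\lfloor x\rfloor)\right)$ (with $x=ka_j/\pi$) is strictly increasing as a product of a positive strictly increasing factor and a nonnegative strictly increasing factor; both arguments are equally valid, and your explicit remark that coinciding singularities reinforce rather than cancel is a welcome extra detail.
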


\begin{proof}
Function $F$ is a sum of terms $\frac{2\pi}{a_j}x\tan\left(\frac{\pi}{2}(x-\lfloor x\rfloor)\right)$ with $x>0$ taking values $ka_j/\pi$ for $j=1,\ldots,d$. They are discontinuous exactly at integer values of $x$, i.e., at $ka_j/\pi=m\in\mathbb{N}$. This proves statement (i). Since a term $x\tan\left(\frac{\pi}{2}(x-\lfloor x\rfloor)\right)$ obviously increases on each its interval of continuity, the same holds true for $F$, which is a linear combination of such terms with positive coefficients; hence we get statement (ii). Finally, we have
$\tan\left(\frac{\pi}{2}(x-\lfloor x\rfloor)\right)\geq0$ for all $x>0$ and
$$
\lim_{k\nearrow\frac{m\pi}{a_j}}\tan\left(\frac{\pi}{2}\left(\frac{ka_j}{\pi}-\left\lfloor\frac{ka_j}{\pi}\right\rfloor\right)\right)=+\infty\,,\;
$$
hence statement (iii) follows immediately.
\end{proof}

In the same manner, one can demonstrate similar features of the function $G(k)$:
\begin{lemma}{Lemma}\label{Lemma G}
Function $G(k)$ on $(0,+\infty)$ has the following properties:
\begin{itemize}
\item[(i)] A $k>0$ is a discontinuity of $G$ if and only if $k=m\pi/a_j$ for some $a_j\in\{a_1,\ldots,a_d\}$, $m\in\mathbb{N}$.
\item[(ii)] $G(k)$ is strictly decreasing in each interval of continuity.
\item[(iii)] For each $m\in\mathbb{N}$ and $a_j\in\{a_1,\ldots,a_d\}$,
$$
\lim_{k\searrow\frac{m\pi}{a_j}}G(k)=+\infty\,.
$$
\end{itemize}
\end{lemma}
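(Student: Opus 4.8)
The plan is to follow the pattern of the proof of Lemma~\ref{Lemma F}, writing $G$ as the sum over $j=1,\ldots,d$ of the terms $\frac{2\pi}{a_j}\,g(ka_j/\pi)$, where $g(x):=x\tan\!\left(\frac{\pi}{2}(\lceil x\rceil-x)\right)$ for $x>0$. Since each coefficient $\frac{2\pi}{a_j}$ is positive, every property (i)--(iii) for $G$ reduces to the corresponding property of the single function $g$ evaluated at the arguments $x=ka_j/\pi$.

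For (i) and (iii), I would examine $g$ near an integer $m\in\mathbb N$. On each open interval $(m,m+1)$ one has $\lceil x\rceil=m+1$, so $g(x)=x\tan\!\left(\frac{\pi}{2}(m+1-x)\right)$ is manifestly continuous, being a product with a $\tan$ whose argument stays in $(0,\pi/2)$. Hence the only candidate discontinuities are the integers $x=ka_j/\pi=m$, i.e.\ $k=m\pi/a_j$, which proves (i) once a genuine jump there is confirmed. Letting $x\searrow m$ gives $\frac{\pi}{2}(m+1-x)\nearrow\frac{\pi}{2}$, whence $\tan\to+\infty$ and therefore $g(x)\to+\infty$; this simultaneously establishes (iii) and the right-hand blow-up needed in (i), while $x\nearrow m$ forces the argument to $0$ and $g\to0$, securing the discontinuity.

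Part (ii) is where the analogy with Lemma~\ref{Lemma F} genuinely breaks down, and it is the main obstacle. For $F$ each summand was a \emph{product of two increasing nonnegative factors}, so monotonicity was automatic; here, on $(m,m+1)$, the factor $x$ increases while $\tan\!\left(\frac{\pi}{2}(m+1-x)\right)$ \emph{decreases}, so strict monotonicity must be earned by an estimate. Writing $w=\frac{\pi}{2}(m+1-x)\in(0,\pi/2)$, I would differentiate to get $g'(x)=\tan w-\frac{\pi x}{2}\sec^2 w$, and observe that $g'(x)<0$ is equivalent to $\sin w\cos w<\frac{\pi x}{2}$, i.e.\ to $\sin(2w)<\pi x$. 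Since $2w=\pi(m+1-x)$ lies in $(0,\pi)$, one has $\sin(2w)=|\sin(\pi x)|$, so the required inequality collapses to $|\sin(\pi x)|<\pi x$, which holds for every $x>0$ by the elementary bound $|\sin t|<t$ for $t>0$. Thus $g$ is strictly decreasing on each interval of continuity, and because $G$ is a positive linear combination of such functions, $G$ inherits the same property. The crux of the argument is precisely this reduction to $|\sin t|<t$; everything else transcribes routinely from the proof of Lemma~\ref{Lemma F}.
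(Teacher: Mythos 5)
Your proof is correct, and the comparison with the paper is worth spelling out: the paper gives \emph{no} proof of Lemma~\ref{Lemma G} at all, merely stating that it follows ``in the same manner'' as Lemma~\ref{Lemma F}. For parts (i) and (iii) that is indeed a routine transcription, exactly as you carry it out. But for part (ii) the analogy genuinely does not transfer verbatim, and you are right to flag this as the crux: in the proof of Lemma~\ref{Lemma F} the monotonicity of each summand $x\tan\left(\frac{\pi}{2}(x-\lfloor x\rfloor)\right)$ is dismissed as obvious because both factors increase, whereas for $g(x)=x\tan\left(\frac{\pi}{2}(\lceil x\rceil-x)\right)$ the two factors pull in opposite directions. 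Your computation closes this gap cleanly: with $w=\frac{\pi}{2}(m+1-x)$ one has $g'(x)=\tan w-\frac{\pi x}{2}\sec^2 w$, and multiplying the inequality $g'(x)<0$ through by $\cos^2 w>0$ reduces it to $\sin(2w)<\pi x$; since $2w=\pi(m+1-x)\in(0,\pi)$ gives $\sin(2w)=\lvert\sin(\pi x)\rvert$, the claim follows from $\lvert\sin t\rvert<t$ for $t>0$. Each summand is therefore strictly decreasing on every interval of continuity of $G$ (such an interval keeps each $ka_j/\pi$ inside a single interval $(m_j,m_j+1)$), and $G$, being a positive linear combination, inherits strict monotonicity. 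In short, your argument supplies the one nontrivial verification that the paper's ``in the same manner'' silently presupposes; everything else matches the paper's intended route.
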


Theorems~\ref{Thm. gaps F} and \ref{Thm. gaps G} below characterize the positions of gaps for $\alpha>0$ and $\alpha<0$, respectively. Since their proofs are similar to each other, we will carry out only one of them.

\begin{theorem}{Theorem}\label{Thm. gaps F}
Let $\alpha>0$.
\begin{itemize}
\item Each spectral gap has the left (lower) endpoint equal to $k^2=\left(\frac{m\pi}{a_j}\right)^2$ for some $j\in\{1,\ldots,d\}$ and some $m\in\mathbb{N}$.
\item A gap adjacent to $k^2=\left(\frac{m\pi}{a_{\ell}}\right)^2$ is present if and only if
\begin{equation}\label{gap F}
\frac{2m\pi}{a_{\ell}}\sum_{j=1}^d\tan\left(\frac{\pi}{2}\left(m\frac{a_j}{a_{\ell}}-\left\lfloor m\frac{a_j}{a_{\ell}}\right\rfloor\right)\right)<\alpha.
\end{equation}
\end{itemize}
\end{theorem}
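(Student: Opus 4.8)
The plan is to read the whole statement off the three structural properties of $F$ recorded in Lemma~\ref{Lemma F}, combined with the gap criterion \eqref{Gap cond.F}, which for $\alpha>0$ says that $k^2\notin\sigma(H)$ exactly when $F(k)<\alpha$. I would therefore first reformulate everything in terms of $F$: the spectral gaps, viewed in the variable $k$, are precisely the maximal intervals on which $F(k)<\alpha$, and the task is to locate their lower endpoints and to decide when they are nonempty.

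For the first assertion I would argue that no gap can straddle a discontinuity and that, inside an interval of continuity, a gap must be anchored at the left endpoint. By Lemma~\ref{Lemma F}(iii) we have $F(k)\to+\infty$ as $k$ approaches any discontinuity $m\pi/a_j$ from the left, so $F(k)\geq\alpha$ on a left neighbourhood of every discontinuity; hence every point immediately below a discontinuity lies in $\sigma(H)$ and no gap contains a discontinuity. Consequently each gap sits inside a single interval of continuity $\big[\frac{m\pi}{a_\ell},\,\text{next discontinuity}\big)$. On such an interval $F$ is strictly increasing by Lemma~\ref{Lemma F}(ii), so the sublevel set $\{k:F(k)<\alpha\}$ restricted to the interval, when nonempty, has the form $\big[\frac{m\pi}{a_\ell},k^\ast\big)$ with $F(k^\ast)=\alpha$; in particular it always contains the left endpoint. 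This forces every gap to begin at a discontinuity $k=\frac{m\pi}{a_\ell}$, i.e.\ to have lower endpoint $k^2=\big(\frac{m\pi}{a_\ell}\big)^2$.

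For the second assertion I would note that $F$ is right-continuous at each discontinuity: at $k=\frac{m\pi}{a_\ell}$ the $\ell$-th summand has $\frac{ka_\ell}{\pi}=m\in\mathbb{N}$, so its fractional part vanishes and the term equals $\tan 0=0$, which coincides with its limit from the right. Thus, by the monotonicity used above, a gap is adjacent to $\big(\frac{m\pi}{a_\ell}\big)^2$ if and only if $F$ is below $\alpha$ just to the right of $\frac{m\pi}{a_\ell}$, i.e.\ if and only if $F\big(\frac{m\pi}{a_\ell}\big)<\alpha$. Substituting $k=\frac{m\pi}{a_\ell}$ into \eqref{F} gives $\frac{ka_j}{\pi}=m\frac{a_j}{a_\ell}$ and $2k=\frac{2m\pi}{a_\ell}$, whence
$$
F\Big(\tfrac{m\pi}{a_\ell}\Big)=\frac{2m\pi}{a_\ell}\sum_{j=1}^d\tan\left(\frac{\pi}{2}\left(m\frac{a_j}{a_\ell}-\left\lfloor m\frac{a_j}{a_\ell}\right\rfloor\right)\right),
$$
which is precisely the left-hand side of \eqref{gap F}; the criterion $F\big(\frac{m\pi}{a_\ell}\big)<\alpha$ is therefore exactly \eqref{gap F}. (The $j=\ell$ term on the right-hand side vanishes automatically, consistently with this computation.)

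The step requiring the most care will be the evaluation of $F$ at a discontinuity when several discontinuities coincide, i.e.\ when $m/a_\ell=m'/a_{j'}$ for some $j'\neq\ell$. Then several summands of $F$ carry integer arguments simultaneously, and I would check that each such summand still resets to $\tan 0=0$ from the right, so that the displayed formula for $F\big(\frac{m\pi}{a_\ell}\big)$ and hence condition \eqref{gap F} remain valid verbatim, while the divergence $F\to+\infty$ from the left is unaffected because the blowing-up terms dominate. A minor additional point to record is that the lowest interval $\big(0,\min_j\frac{\pi}{a_j}\big)$ produces the region below $\inf\sigma(H)$ rather than a genuine gap between bands, so the characterization with $m\in\mathbb{N}$ applies to the gaps proper.
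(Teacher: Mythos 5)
Your proposal is correct and follows essentially the same route as the paper: both arguments use Lemma~\ref{Lemma F} (monotonicity on intervals of continuity plus the blow-up from the left at each discontinuity) to anchor every gap at a point $m\pi/a_\ell$, and both identify the gap criterion with the value of $F$ just to the right of that point --- you phrase it via right-continuity and $F(m\pi/a_\ell)$, the paper via $\lim_{k\searrow m\pi/a_\ell}F(k)$, which is the same quantity. The extra remarks on coinciding discontinuities and on the interval below $\inf\sigma(H)$ are harmless refinements not present in the paper.
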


\begin{proof}
Lemma~\ref{Lemma F} implies that in each interval of continuity of $F$, the values $F(k)$ grow up to infinity. Therefore, the values $k$ obeying the gap condition~\eqref{Gap cond.F} form at most one connected set in each interval of continuity of $F$. This connected set is adjacent to the left endpoint of the interval of continuity and does not extend to its right endpoint. Consequently, the left endpoint of any gap is $k^2=(m\pi/a_\ell)^2$ for some $\ell\in\{1,\ldots,d\}$ and $m\in\mathbb{N}$.

Moreover, a gap with the left endpoint equal to $k^2=(m\pi/a_{\ell})^2$ is present if and only if $\lim_{k\searrow\frac{m\pi}{a_{\ell}}}F(k)<\alpha$. Since
$$
\lim_{k\searrow\frac{m\pi}{a_{\ell}}}F(k)=\frac{2m\pi}{a_{\ell}}\sum_{j=1}^d\tan\left(\frac{\pi}{2}\left(m\frac{a_j}{a_{\ell}}-\left\lfloor m\frac{a_j}{a_{\ell}}\right\rfloor\right)\right),
$$
the gap condition~\eqref{gap F} follows immediately.
\end{proof}

\begin{theorem}{Theorem}\label{Thm. gaps G}
Let $\alpha<0$.
\begin{itemize}
\item Each spectral gap has the right (upper) endpoint equal to $k^2=\left(\frac{m\pi}{a_j}\right)^2$ for some $j\in\{1,\ldots,d\}$ and $m\in\mathbb{N}$.
\item A gap adjacent to $k^2=\left(\frac{m\pi}{a_{\ell}}\right)^2$ is present if and only if
\begin{equation}\label{gap G}
\frac{2m\pi}{a_{\ell}}\sum_{j=1}^d\tan\left(\frac{\pi}{2}\left(\left\lceil m\frac{a_j}{a_{\ell}}\right\rceil-m\frac{a_j}{a_{\ell}}\right)\right)<|\alpha|.
\end{equation}
\end{itemize}
\end{theorem}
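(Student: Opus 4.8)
The plan is to mirror the proof of Theorem~\ref{Thm. gaps F}, replacing the monotone increasing function $F$ by $G$ and invoking Lemma~\ref{Lemma G} in place of Lemma~\ref{Lemma F}. The only structural change is that, since $G$ is \emph{decreasing} rather than increasing on its intervals of continuity, the roles of the left and right endpoints of each gap become interchanged.

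First I would fix an interval of continuity of $G$, say $(k_-,k_+)$ bounded by two consecutive discontinuities from the set $\{m\pi/a_j\}$. By Lemma~\ref{Lemma G}(iii), $G(k)\to+\infty$ as $k\searrow k_-$, and by Lemma~\ref{Lemma G}(ii), $G$ is strictly decreasing on $(k_-,k_+)$; hence it descends monotonically from $+\infty$ down to the finite value $\lim_{k\nearrow k_+}G(k)$. Consequently the gap set $\{k\in(k_-,k_+):G(k)<|\alpha|\}$ defined by \eqref{Gap cond.G} is an interval (possibly empty) adjacent to the right endpoint $k_+$ and bounded away from $k_-$. This already yields the first claim: the right endpoint of every spectral gap must be a discontinuity $k^2=(m\pi/a_\ell)^2$.

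Next, the gap adjacent to $k_+=m\pi/a_\ell$ is nonempty precisely when the infimum of $G$ on the interval, namely $\lim_{k\nearrow m\pi/a_\ell}G(k)$, lies strictly below $|\alpha|$. It then remains to evaluate this limit. As $k\nearrow m\pi/a_\ell$ one has $ka_j/\pi\to m\,a_j/a_\ell$ for each $j$; the factor $2k$ tends to $2m\pi/a_\ell$, the summand $j=\ell$ contributes $\tan\left(\frac{\pi}{2}(\lceil m\rceil-m)\right)=0$, and for $j\neq\ell$ the ceiling function is continuous at the limiting argument, so each such summand tends to $\tan\left(\frac{\pi}{2}(\lceil m\,a_j/a_\ell\rceil-m\,a_j/a_\ell)\right)$. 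Substituting into \eqref{G} reproduces exactly the left-hand side of \eqref{gap G}, whence the characterization follows.

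I expect the only genuinely delicate point to be the evaluation of the one-sided limit at $k_+$: one must confirm that no summand blows up there. This is precisely where the choice of the \emph{left} limit $k\nearrow m\pi/a_\ell$ (as opposed to the right limit used for $F$) is essential, since approaching from the left keeps $\lceil ka_\ell/\pi\rceil$ equal to $m$ so that the $j=\ell$ term shrinks to $0$, whereas the right limit would drive it to $+\infty$. A secondary subtlety is that several ratios could make $m\,a_j/a_\ell$ simultaneously integral, so that $k_+$ is a common discontinuity of several summands; but in that case each coincident term likewise vanishes in the left limit, matching the corresponding zero terms in \eqref{gap G}, so the formula is unaffected.
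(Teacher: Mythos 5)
Your proposal is correct and is essentially the paper's intended argument: the paper explicitly omits the proof of Theorem~\ref{Thm. gaps G}, stating it is analogous to that of Theorem~\ref{Thm. gaps F}, and your mirrored argument (using Lemma~\ref{Lemma G}, with the monotonicity reversed so gaps attach to right endpoints and the relevant one-sided limit is $k\nearrow m\pi/a_\ell$) is exactly that analogue, including the correct handling of coincident discontinuities.
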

As a corollary of Theorems~\ref{Thm. gaps F} and \ref{Thm. gaps G}, we obtain an example of a quantum graph with infinitely many gaps in the spectrum:

\begin{corollary}{Corollary}\label{Coro. infinite}
If $\alpha\neq0$ and all the edge lengths are commensurable, i.e., $a_j/a_\ell$ is a rational number for all $j,\ell\in\{1,\ldots,d\}$, then the number of gaps in the spectrum of $H$ is infinite.
\end{corollary}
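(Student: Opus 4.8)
The plan is to treat the two sign cases $\alpha>0$ and $\alpha<0$ separately, using the gap criteria~\eqref{gap F} and~\eqref{gap G} supplied by Theorems~\ref{Thm. gaps F} and~\ref{Thm. gaps G}, and in each case to produce an \emph{infinite} family of pairs $(\ell,m)$ for which the relevant strict inequality holds. The commensurability hypothesis is what makes this possible: writing $a_j/a_1=p_j/q_j$ in lowest terms for $j=1,\ldots,d$, I fix $\ell=1$ and set $N:=\operatorname{lcm}(q_1,\ldots,q_d)$. Then for every multiple $m=Nt$ with $t\in\mathbb{N}$ the quantity $m\,a_j/a_1=Nt\,p_j/q_j$ is an integer \emph{for all $j$ simultaneously}, since $q_j\mid N$.

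With such a choice all the fractional parts vanish at once. For $\alpha>0$ one has $m\,a_j/a_1-\left\lfloor m\,a_j/a_1\right\rfloor=0$, so each summand equals $\tan\!\left(\tfrac{\pi}{2}\cdot 0\right)=0$ and the entire left-hand side of~\eqref{gap F} is $0$. As $\alpha>0$, the strict inequality $0<\alpha$ holds, and Theorem~\ref{Thm. gaps F} then guarantees that a gap is present adjacent to $k^2=(Nt\pi/a_1)^2$ (note that $k=Nt\pi/a_1$ is indeed of the required form $m\pi/a_\ell$ with $\ell=1$ and $m=Nt\in\mathbb{N}$). For $\alpha<0$ the argument is identical: when $m\,a_j/a_1\in\mathbb{N}$ we also have $\left\lceil m\,a_j/a_1\right\rceil-m\,a_j/a_1=0$, so the left-hand side of~\eqref{gap G} vanishes and $0<|\alpha|$ yields a gap by Theorem~\ref{Thm. gaps G}.

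Finally I must check that these gaps are genuinely distinct rather than repetitions of a single one. As $t$ ranges over $\mathbb{N}$ the numbers $Nt\pi/a_1$ are strictly increasing, so the corresponding gap endpoints $k^2=(Nt\pi/a_1)^2$ are pairwise different; since by Theorems~\ref{Thm. gaps F} and~\ref{Thm. gaps G} each gap has a unique endpoint of this form (the left endpoint for $\alpha>0$, the right endpoint for $\alpha<0$), distinct endpoints force distinct gaps. Hence the spectrum contains infinitely many gaps. I anticipate no real obstacle in this argument: the only step requiring care is the \emph{uniform} selection of $m$ that clears all $d$ fractional parts at the same time, and this is exactly where the $\operatorname{lcm}$ construction and the pairwise rationality of the ratios enter. (One could equally take $N=q_1\cdots q_d$; the least common multiple is simply the most economical choice.)
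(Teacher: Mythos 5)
Your proposal is correct and follows essentially the same route as the paper: fix $\ell=1$, use the least common multiple of the denominators of the ratios $a_j/a_1$ to find infinitely many $m$ for which all fractional parts vanish simultaneously, so the left-hand sides of \eqref{gap F} and \eqref{gap G} are zero and the strict inequality against $\alpha\neq0$ yields a gap at each $(m\pi/a_1)^2$. The only (welcome) addition is your explicit remark that distinct endpoints give distinct gaps, which the paper leaves implicit.
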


\begin{proof}
According to the assumptions, there exist $r_2,\ldots,r_d\in\mathbb{N}$ and $s_2,\ldots,s_d\in\mathbb{N}$ such that
$$
\frac{a_j}{a_1}=\frac{r_j}{s_j}\in\mathbb{Q}\,, \qquad j=2,\ldots,d\,.
$$
Let $s$ be the least common multiple of $s_2,\ldots,s_d$; then $sa_j/a_1=r_j\cdot s/s_j\in\mathbb{N}$ for all $j=2,\ldots,d$. Then every $h\in\mathbb{N}$ satisfies
$$
\left\lfloor hs\frac{a_j}{a_1}\right\rfloor=\left\lceil hs\frac{a_j}{a_1}\right\rceil=hs\frac{a_j}{a_1}\in\mathbb{N} \qquad \text{for all $j=2,\ldots,d$}.
$$
Consequently, the left hand side of condition~\eqref{gap F} (case $\alpha>0$) and of condition~\eqref{gap G} (case $\alpha<0$) with $\ell=1$ and $m=hs$ vanishes for every $h\in\mathbb{N}$. This gives rise to infinitely many spectral gaps adjacent to points $(hs\pi/a_1)^2$ for $h\in\mathbb{N}$. Notice that the statement holds true for either sign of $\alpha$.
\end{proof}

\section{Finite number of spectral gaps}\label{Incommensurable}

In view of the Bethe--Sommerfeld conjecture, the most interesting situation occurs when the number of spectral gaps is finite. Let us recall that we have already found a necessary condition for the finiteness of the number of gaps in Section~\ref{Sect.gaps}: putting aside the case $\alpha=0$ (when the spectrum has trivially no gaps, see Proposition~\ref{Prop. alpha=0}), the ratios of some edge lengths must be irrational (Corollary~\ref{Coro. infinite}). The aim of this section is to find sufficient conditions.

To that end, we will need an appropriate characterization of irrationality. It is convenient to start from a classical notion of \emph{Markov constant}. The word ``constant'' is somewhat misleading, because the Markov constant is actually defined as a function of a real argument $\gamma\in\mathbb{R}$ as follows:
\begin{equation}\label{mu(theta)}
\mu(\gamma)=\inf\left\{c>0\;\left|\;\left(\exists_\infty(p,q)\in\mathbb{Z}^2\right)\left(\left|\gamma-\frac{p}{q}\right|<\frac{c}{q^2}\right)\right.\right\}.
\end{equation}
An (irrational) $\gamma\in\mathbb{R}$ is called \emph{badly approximable} \cite[p.~22]{Sch80} if there exists a $c>0$ such that $|\gamma-p/q|>c/q^2$ for all rational $p/q$. In other words, $\gamma$ is badly approximable if $\mu(\gamma)>0$. Badly approximable numbers form an uncountable set of Lebesgue measure zero in $\mathbb{R}$; see \cite[Theorem~5F and Corollary~5G]{Sch80} and \cite[Theorem~29]{Kh64}.

We aim at formulating a sufficient condition for the finiteness of the number of gaps in terms of a function that is defined similarly to \eqref{mu(theta)}, but takes $\gamma>0$ and does not use the absolute value; namely
\begin{equation}\label{upsilon}
\upsilon(\gamma)=\inf\left\{c>0\;\left|\;\left(\exists_\infty(p,q)\in\mathbb{Z}^2\right)\left(0<\gamma-\frac{p}{q}<\frac{c}{q^2}\right)\right.\right\}.
\end{equation}
Function $\upsilon(\gamma)$ can be viewed as ``finer'' than Markov constant in the sense that it distinguishes between $\gamma$ and $\gamma^{-1}$. While the Markov constant satisfies $\mu(\gamma)=\mu(\gamma^{-1})$ for any $\gamma\neq0$, values $\upsilon(\gamma)$ and $\upsilon(\gamma^{-1})$ can differ. The smaller one of them coincides with the Markov constant,
\begin{equation}\label{upsilon iii}
\mu(\gamma)=\min\{\upsilon(\gamma),\upsilon(\gamma^{-1})\}.
\end{equation}
Formula~\eqref{upsilon iii} was derived in~\cite{ET17}, where function $\upsilon(\gamma)$ (``upsilon'') was originally introduced. Ibidem the following basic properties were derived:
\begin{align}
\upsilon(\gamma)=&\inf\left\{c>0\;\left|\;\left(\exists_\infty m\in\mathbb{N}\right)\left(m(m\gamma-\lfloor m\gamma\rfloor)<c\right)\right.\right\}, \label{upsilon i} \\
\upsilon(\gamma^{-1})=&\inf\left\{c>0\;\left|\;\left(\exists_\infty m\in\mathbb{N}\right)\left(m(\lceil m\gamma\rceil-m\gamma)<c\right)\right.\right\}. \label{upsilon ii}
\end{align}
(symbols $\lfloor\cdot\rfloor$ and $\lceil\cdot\rceil$ stand for the floor and the ceiling function, respectively).

With equations~\eqref{upsilon i} and \eqref{upsilon ii} in hand, we are ready to formulate a condition that guarantees the number of gaps in the spectrum of the lattice graph to be finite. Since the condition depends on the sign of the coupling parameter $\alpha$, we distinguish the case $\alpha>0$ and $\alpha<0$ in Proposition~\ref{Prop. finite F} and \ref{Prop. finite G}, respectively.

\begin{proposition}{Proposition}\label{Prop. finite F}
Let $\alpha>0$. If
\begin{equation}\label{alpha>0 fin.}
\alpha<\frac{\pi^2}{a_\ell}\sum_{j=1}^d\upsilon\left(\frac{a_j}{a_{\ell}}\right) \qquad\text{for all $\ell\in\{1,\ldots,d\}$}\,,
\end{equation}
there are at most finitely many gaps in the spectrum of $H$.
\end{proposition}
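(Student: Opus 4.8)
The plan is to reduce the finiteness of the number of gaps to a counting statement about the admissible pairs $(\ell,m)$ supplied by Theorem~\ref{Thm. gaps F}, and then to control that count through the lower-limit characterisation of $\upsilon$. First I would invoke Theorem~\ref{Thm. gaps F}: every gap has left endpoint $\left(\frac{m\pi}{a_\ell}\right)^2$, and a gap adjacent to such a point is present exactly when the left-hand side of \eqref{gap F} is smaller than $\alpha$. Hence the number of gaps is bounded by the number of pairs $(\ell,m)\in\{1,\ldots,d\}\times\mathbb{N}$ satisfying \eqref{gap F}, and it suffices to prove that for each fixed $\ell$ only finitely many $m$ obey \eqref{gap F}, then sum over the $d$ values of $\ell$.

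The key estimate I would use is the elementary inequality $\tan t\geq t$ on $[0,\pi/2)$, applied with $t=\frac{\pi}{2}\bigl(m\frac{a_j}{a_\ell}-\lfloor m\frac{a_j}{a_\ell}\rfloor\bigr)$; this is legitimate since the fractional part lies in $[0,1)$. As all summands in \eqref{gap F} are nonnegative, this would yield the clean lower bound
\[
\frac{2m\pi}{a_\ell}\sum_{j=1}^d \tan\!\left(\frac{\pi}{2}\Bigl(m\frac{a_j}{a_\ell}-\Bigl\lfloor m\frac{a_j}{a_\ell}\Bigr\rfloor\Bigr)\right) \;\geq\; \frac{\pi^2}{a_\ell}\sum_{j=1}^d m\Bigl(m\frac{a_j}{a_\ell}-\Bigl\lfloor m\frac{a_j}{a_\ell}\Bigr\rfloor\Bigr),
\]
so a pair $(\ell,m)$ can meet \eqref{gap F} only if $\sum_{j=1}^d m\bigl(m\frac{a_j}{a_\ell}-\lfloor m\frac{a_j}{a_\ell}\rfloor\bigr)<\frac{a_\ell\alpha}{\pi^2}$ (the term $j=\ell$ contributing $0$, consistently with $\upsilon(1)=0$).

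Next I would read off from \eqref{upsilon i} that $\upsilon(\gamma)=\liminf_{m\to\infty} m\bigl(m\gamma-\lfloor m\gamma\rfloor\bigr)$: for $c$ above this lower limit the inequality in \eqref{upsilon i} holds for infinitely many $m$, while for $c$ below it only finitely many $m$ qualify, so the defining infimum equals the liminf. I would then apply superadditivity of the lower limit, $\liminf_m\sum_j x^{(j)}_m\geq\sum_j\liminf_m x^{(j)}_m$, to $x^{(j)}_m=m\bigl(m\frac{a_j}{a_\ell}-\lfloor m\frac{a_j}{a_\ell}\rfloor\bigr)$, obtaining
\[
\liminf_{m\to\infty}\sum_{j=1}^d m\Bigl(m\frac{a_j}{a_\ell}-\Bigl\lfloor m\frac{a_j}{a_\ell}\Bigr\rfloor\Bigr) \;\geq\; \sum_{j=1}^d \upsilon\!\left(\frac{a_j}{a_\ell}\right) \;>\; \frac{a_\ell\alpha}{\pi^2},
\]
where the strict inequality is exactly hypothesis \eqref{alpha>0 fin.}. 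Because the liminf strictly exceeds $\frac{a_\ell\alpha}{\pi^2}$, the sum stays above $\frac{a_\ell\alpha}{\pi^2}$ for all $m$ beyond some $M_\ell$; hence only finitely many $m$ satisfy \eqref{gap F} for this $\ell$, and summing over $\ell\in\{1,\ldots,d\}$ completes the argument.

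The step I expect to be the main obstacle is precisely the superadditivity of the lower limit, which carries the genuine content: for distinct indices $j$ the values of $m$ realising small $m\bigl(m\frac{a_j}{a_\ell}-\lfloor m\frac{a_j}{a_\ell}\rfloor\bigr)$ need not coincide, so one cannot argue coordinate by coordinate, and it is essential that the \emph{sum} of the individual lower limits—rather than any single one—bounds the liminf of the sum from below. Everything else (the tangent estimate and the identification of $\upsilon$ with a liminf) is routine once these reductions are in place.
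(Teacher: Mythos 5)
Your argument is correct and follows essentially the same route as the paper's proof: reduce to the gap condition \eqref{gap F} via Theorem~\ref{Thm. gaps F}, bound the tangent from below by its argument, and invoke the characterization \eqref{upsilon i} of $\upsilon$ to rule out all but finitely many $m$ for each $\ell$. The only cosmetic difference is that you package the final counting step as superadditivity of the $\liminf$ (after identifying $\upsilon(\gamma)$ with $\liminf_m m(m\gamma-\lfloor m\gamma\rfloor)$), whereas the paper distributes the slack $c$ as $c/d$ over the $d$ summands and applies a pigeonhole argument---these are the same estimate in different notation.
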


\begin{proof}
According to Theorem~\ref{Thm. gaps F}, any gap is adjacent to a point $(m\pi/a_\ell)^2$ for some $\ell\in\{1,\ldots,d\}$, $m\in\mathbb{N}$. Let $\ell$ be fixed. If there is a gap adjacent to $(m\pi/a_{\ell})^2$, then $m$ satisfies the gap condition~\eqref{gap F}, i.e.,
$$
\frac{2m\pi}{a_{\ell}}\sum_{j=1}^d\tan\left(\frac{\pi}{2}\left(m\frac{a_j}{a_{\ell}}-\left\lfloor m\frac{a_j}{a_{\ell}}\right\rfloor\right)\right)<\alpha.
$$
The assumption~\eqref{alpha>0 fin.} guarantees the existence of a $c>0$ such that
$$
\alpha=\frac{\pi^2}{a_\ell}\sum_{j=1}^d\upsilon\left(\frac{a_j}{a_{\ell}}\right)-c.
$$
Hence
$$
\frac{2m\pi}{a_{\ell}}\sum_{j=1}^d\tan\left(\frac{\pi}{2}\left(m\frac{a_j}{a_{\ell}}-\left\lfloor m\frac{a_j}{a_{\ell}}\right\rfloor\right)\right)<\frac{\pi^2}{a_\ell}\sum_{j=1}^d\upsilon\left(\frac{a_j}{a_{\ell}}\right)-c,
$$
which is equivalent to
$$
\sum_{j=1}^d\left[\frac{2m\pi}{a_{\ell}}\tan\left(\frac{\pi}{2}\left(m\frac{a_j}{a_{\ell}}-\left\lfloor m\frac{a_j}{a_{\ell}}\right\rfloor\right)\right)-\frac{\pi^2}{a_\ell}\upsilon\left(\frac{a_j}{a_{\ell}}\right)+\frac{c}{d}\right]<0.
$$
Consequently, there obviously exists a $j$ such that
$$
\frac{2m\pi}{a_{\ell}}\tan\left(\frac{\pi}{2}\left(m\frac{a_j}{a_{\ell}}-\left\lfloor m\frac{a_j}{a_{\ell}}\right\rfloor\right)\right)<\frac{\pi^2}{a_\ell}\upsilon\left(\frac{a_j}{a_{\ell}}\right)-\frac{c}{d}\,.
$$
Hence we get, using the estimate $x\leq\tan x$ (valid for all $x\geq0$),
$$
\frac{2m\pi}{a_{\ell}}\cdot\frac{\pi}{2}\left(m\frac{a_j}{a_{\ell}}-\left\lfloor m\frac{a_j}{a_{\ell}}\right\rfloor\right)<\frac{\pi^2}{a_\ell}\upsilon\left(\frac{a_j}{a_{\ell}}\right)-\frac{c}{d}\,,
$$
i.e.,
\begin{equation}\label{ineq prop F}
m\left(m\frac{a_j}{a_{\ell}}-\left\lfloor m\frac{a_j}{a_{\ell}}\right\rfloor\right)<\upsilon\left(\frac{a_j}{a_{\ell}}\right)-\frac{a_\ell}{d\pi^2}c\,.
\end{equation}
Since the right hand side of \eqref{ineq prop F} is strictly less than $\upsilon(a_j/a_\ell)$, inequality~\eqref{ineq prop F} can be satisfied for at most finitely many values of $m$ due to property~\eqref{upsilon i} of $\upsilon$. Consequently, for each $\ell\in\{1,\ldots,d\}$ there are at most finitely many gaps adjacent to points $(m\pi/a_{\ell})^2$. The total number of gaps in the spectrum is thus finite as well.
\end{proof}

\begin{remark}{Remark}
Let us briefly explain why the statement of Proposition~\ref{Prop. finite F} (as well as of Proposition~\ref{Prop. finite G} below) is not formulated as an equivalence. Assume $\alpha>\frac{\pi^2}{a_\ell}\sum_{j=1}^d\upsilon\left(\frac{a_j}{a_{\ell}}\right)$ (here we intentionally take strict inequality $>$ instead of $\geq$) for some $\ell\in\{1,\ldots,d\}$, i.e., $\alpha-\frac{\pi^2}{a_\ell}\sum_{j=1}^d\upsilon\left(\frac{a_j}{a_{\ell}}\right)=c>0$. Using techniques similar to the proof of Proposition~\ref{Prop. finite F}, one can demonstrate that there are infinitely many $m\in\mathbb{N}$ with the property
\begin{equation}\label{ineq impossible}
\frac{2m\pi}{a_{\ell}}\tan\left(\frac{\pi}{2}\left(m\frac{a_j}{a_{\ell}}-\left\lfloor m\frac{a_j}{a_{\ell}}\right\rfloor\right)\right)<\frac{\pi^2}{a_\ell}\upsilon\left(\frac{a_j}{a_{\ell}}\right)+\frac{c}{d}\,.
\end{equation}
The set of numbers $m\in\mathbb{N}$ obeying~\eqref{ineq impossible} depends on $j$. If the intersection of those sets (for all $j$) has infinite cardinality, one can sum \eqref{ineq impossible} over $j$, which leads to the gap condition~\eqref{gap F} valid for an infinite number of integers $m$. So there are infinitely many spectral gaps, which is the desired result. But if the intersection happens to contain only finitely many $m\in\mathbb{N}$, then the existence of infinitely many spectral gaps is not established by this approach.
The only exception is the case $d=2$, where the sum over $d$ consists of only one nonzero term (notice that the term for $j=\ell$ always vanishes). In this case the assumption $\alpha>\frac{\pi^2}{a_\ell}\sum_{j=1}^2\upsilon\left(\frac{a_j}{a_{\ell}}\right)$ for $\ell=1$ or $\ell=2$ implies that the number of spectral gaps is infinite; cf.~\cite[Prop.~4.4]{ET17}.
\end{remark}

\begin{proposition}{Proposition}\label{Prop. finite G}
Let $\alpha<0$. If
\begin{equation}\label{alpha<0 fin.}
|\alpha|<\frac{\pi^2}{a_\ell}\sum_{j=1}^d\upsilon\left(\frac{a_{\ell}}{a_j}\right) \qquad\text{for all $\ell\in\{1,\ldots,d\}$}\,,
\end{equation}
there are at most finitely many gaps in the spectrum of $H$.
\end{proposition}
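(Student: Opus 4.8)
The plan is to mirror the proof of Proposition~\ref{Prop. finite F} almost verbatim, substituting Theorem~\ref{Thm. gaps G} for Theorem~\ref{Thm. gaps F} and the characterization~\eqref{upsilon ii} of $\upsilon$ for its floor-function counterpart~\eqref{upsilon i}. The crucial point that makes the statement come out correctly is that the gap condition~\eqref{gap G} for $\alpha<0$ features the ceiling function $\lceil m a_j/a_\ell\rceil$, whereas~\eqref{gap F} for $\alpha>0$ features the floor function. By~\eqref{upsilon ii} it is precisely $\upsilon(\gamma^{-1})$ that is controlled through the expression $m(\lceil m\gamma\rceil-m\gamma)$; taking $\gamma=a_j/a_\ell$, so that $\gamma^{-1}=a_\ell/a_j$, this is why the hypothesis~\eqref{alpha<0 fin.} is phrased in terms of $\upsilon(a_\ell/a_j)$ rather than $\upsilon(a_j/a_\ell)$.

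First I would invoke Theorem~\ref{Thm. gaps G} to reduce the problem: every spectral gap is adjacent to some point $(m\pi/a_\ell)^2$, and such a gap is present only if $m$ satisfies the gap condition~\eqref{gap G}. Fixing $\ell\in\{1,\ldots,d\}$, it then suffices to show that only finitely many $m\in\mathbb{N}$ can satisfy~\eqref{gap G}, for the total count is then a finite sum over the $d$ choices of $\ell$.

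Next I would use the hypothesis~\eqref{alpha<0 fin.} to produce a constant $c>0$ with $|\alpha|=\frac{\pi^2}{a_\ell}\sum_{j=1}^d\upsilon(a_\ell/a_j)-c$. Substituting this into~\eqref{gap G} and rearranging shows that the sum over $j$ of the quantities $\frac{2m\pi}{a_\ell}\tan\!\left(\frac{\pi}{2}(\lceil m a_j/a_\ell\rceil-m a_j/a_\ell)\right)-\frac{\pi^2}{a_\ell}\upsilon(a_\ell/a_j)+\frac{c}{d}$ is strictly negative, so at least one summand (with index $j$ possibly depending on $m$) is negative. Applying the elementary estimate $x\le\tan x$ for $x\ge0$ to that summand and simplifying yields $m\!\left(\lceil m a_j/a_\ell\rceil-m a_j/a_\ell\right)<\upsilon(a_\ell/a_j)-\frac{a_\ell}{d\pi^2}c$, exactly the ceiling analogue of~\eqref{ineq prop F}.

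Finally, the right-hand side here is strictly smaller than $\upsilon(a_\ell/a_j)$, so property~\eqref{upsilon ii} (read with $\gamma=a_j/a_\ell$) forbids this inequality from holding for infinitely many $m$. Hence for each fixed $j$ only finitely many $m$ qualify, and since each gap adjacent to $(m\pi/a_\ell)^2$ forces one of the $d$ indices $j$, only finitely many such $m$ exist; summing over $\ell$ then gives finitely many gaps in total. I do not anticipate a genuine obstacle: the argument is a direct transcription of the $\alpha>0$ case, and the sole point requiring care is the correct pairing of the ceiling-function gap condition with the reciprocal argument $a_\ell/a_j$ of $\upsilon$, as dictated by~\eqref{upsilon ii}.
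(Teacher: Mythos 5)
Your proposal is correct and follows exactly the route the paper takes: the paper's own proof is a two-line reduction to the argument of Proposition~\ref{Prop. finite F}, arriving at precisely your ceiling-function inequality (its~\eqref{ineq prop G}) and closing via~\eqref{upsilon ii} with $\gamma=a_j/a_\ell$. You also correctly identify the one point that needs care, namely that the ceiling-based gap condition~\eqref{gap G} pairs with $\upsilon(a_\ell/a_j)=\upsilon(\gamma^{-1})$, which is why the hypothesis is stated with the reciprocal ratio.
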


\begin{proof}
The statement is demonstrated in a manner similar to Proposition~\ref{Prop. finite F}. Starting from the spectral condition~\eqref{gap G} and using the assumption~\eqref{alpha<0 fin.}, one arrives at the condition
\begin{equation}\label{ineq prop G}
m\left(\left\lceil m\frac{a_j}{a_{\ell}}\right\rceil-m\frac{a_j}{a_{\ell}}\right)<\upsilon\left(\frac{a_{\ell}}{a_j}\right)-\frac{a_\ell}{d\pi^2}c\,,
\end{equation}
which plays the role of condition~\eqref{ineq prop F} in the proof of Proposition~\ref{Prop. finite F}.
Since the right hand side of~\eqref{ineq prop G} is less than $\upsilon(a_{\ell}/a_j)$, one can use~\eqref{upsilon ii} with $\gamma=a_j/a_{\ell}$ to infer that for any $\ell\in\{1,\ldots,d\}$ there at most finitely many values of $m$ obeying inequality~\eqref{ineq prop G}. Consequently, the total number of gaps in the spectrum is finite.
\end{proof}

We conclude this section by a common corollary of Propositions~\ref{Prop. finite F} and \ref{Prop. finite G}. It says that under certain assumption on the edge lengths of the lattice, the number of gaps in the spectrum becomes at most finite when the $\delta$ coupling (repulsive or attractive) is weak enough.

\begin{corollary}{Corollary}\label{Coro. finite}
If for each $\ell\in\{1,\ldots,d\}$ there is a $j_\ell\in\{1,\ldots,d\}$ such that $a_{j_\ell}/a_\ell$ is a badly approximable number, then for a sufficiently small $|\alpha|>0$, there are at most finitely many gaps in the spectrum.
\end{corollary}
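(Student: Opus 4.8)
The plan is to reduce the statement directly to Propositions~\ref{Prop. finite F} and \ref{Prop. finite G} by showing that the badly approximable hypothesis forces the right-hand sides of the conditions \eqref{alpha>0 fin.} and \eqref{alpha<0 fin.} to be strictly positive for every $\ell$. First I would unpack the assumption: for each $\ell\in\{1,\ldots,d\}$ there is an index $j_\ell$ such that $\gamma_\ell:=a_{j_\ell}/a_\ell$ is badly approximable, which by the definition recalled after \eqref{mu(theta)} means $\mu(\gamma_\ell)>0$ (in particular $\gamma_\ell$ is irrational, so $j_\ell\neq\ell$).

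The key step is to invoke the identity \eqref{upsilon iii}, namely $\mu(\gamma)=\min\{\upsilon(\gamma),\upsilon(\gamma^{-1})\}$. Since $\mu(\gamma_\ell)>0$ and the right-hand side is a minimum, this forces \emph{both} $\upsilon(a_{j_\ell}/a_\ell)>0$ and $\upsilon(a_\ell/a_{j_\ell})>0$. This observation is what lets a single hypothesis cover both signs of $\alpha$: the first positivity feeds the $\alpha>0$ condition of Proposition~\ref{Prop. finite F}, while the second feeds the $\alpha<0$ condition of Proposition~\ref{Prop. finite G}. Because $\upsilon(\cdot)$ is always nonnegative (it is an infimum of positive numbers, cf.~\eqref{upsilon}), each relevant sum is bounded below by a single positive term: $\sum_{j=1}^d\upsilon(a_j/a_\ell)\geq\upsilon(a_{j_\ell}/a_\ell)>0$ and $\sum_{j=1}^d\upsilon(a_\ell/a_j)\geq\upsilon(a_\ell/a_{j_\ell})>0$.

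Consequently, both quantities $\frac{\pi^2}{a_\ell}\sum_{j=1}^d\upsilon(a_j/a_\ell)$ and $\frac{\pi^2}{a_\ell}\sum_{j=1}^d\upsilon(a_\ell/a_j)$ are strictly positive for every $\ell$. Since $\ell$ ranges over the finite set $\{1,\ldots,d\}$, I would set
$$
\alpha_0:=\min_{\ell\in\{1,\ldots,d\}}\min\left\{\frac{\pi^2}{a_\ell}\sum_{j=1}^d\upsilon\!\left(\frac{a_j}{a_\ell}\right),\;\frac{\pi^2}{a_\ell}\sum_{j=1}^d\upsilon\!\left(\frac{a_\ell}{a_j}\right)\right\},
$$
which is a minimum of finitely many strictly positive numbers and hence satisfies $\alpha_0>0$.

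Finally, for any $\alpha$ with $0<|\alpha|<\alpha_0$ the threshold guarantees that \eqref{alpha>0 fin.} holds for all $\ell$ when $\alpha>0$, and that \eqref{alpha<0 fin.} holds for all $\ell$ when $\alpha<0$; applying the corresponding Proposition~\ref{Prop. finite F} or \ref{Prop. finite G} then yields finitely many gaps. The argument is a clean reduction, so I do not anticipate a serious obstacle; the only genuine content lies in the second step, where the symmetry of the badly approximable property under $\gamma\mapsto\gamma^{-1}$ through \eqref{upsilon iii} is exactly what is needed to make both propositions applicable at once. A merely one-sided lower bound on $\upsilon$ would control only one sign of $\alpha$.
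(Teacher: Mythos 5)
Your proposal is correct and follows essentially the same route as the paper: bound each sum in \eqref{alpha>0 fin.} and \eqref{alpha<0 fin.} below by the single term $\upsilon(a_{j_\ell}/a_\ell)$ (resp.\ $\upsilon(a_\ell/a_{j_\ell})$), use \eqref{upsilon iii} to get positivity from $\mu(a_{j_\ell}/a_\ell)>0$, and take a minimum over the finitely many $\ell$ to obtain the threshold. The only difference is cosmetic: the paper writes out $\alpha>0$ and dismisses $\alpha<0$ as ``similar,'' whereas you make explicit that $\mu(\gamma)=\min\{\upsilon(\gamma),\upsilon(\gamma^{-1})\}$ is precisely what covers both signs at once.
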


\begin{proof}
Let $\alpha>0$.
If $a_{j_\ell}/a_\ell$ is badly approximable, then $\upsilon(a_{j_\ell}/a_\ell)\geq\mu(a_{j_\ell}/a_\ell)>0$ (cf.~\eqref{upsilon iii}); hence
\begin{equation}\label{c l}
\sum_{j=1}^d\upsilon\left(\frac{a_j}{a_{\ell}}\right)\geq\upsilon\left(\frac{a_{j_\ell}}{a_\ell}\right)>0\,.
\end{equation}
By assumption, estimate~\eqref{c l} can be made for each $\ell\in\{1,\ldots,d\}$. Consequently,
$$
\min_{\ell\in\{1,\ldots,d\}}\frac{\pi^2}{a_\ell}\sum_{j=1}^d\upsilon\left(\frac{a_j}{a_{\ell}}\right)\geq\min_{\ell\in\{1,\ldots,d\}}\frac{\pi^2}{a_\ell}\upsilon\left(\frac{a_{j_\ell}}{a_\ell}\right)=:c>0.
$$
Then Proposition~\ref{Prop. finite F} guarantees that the number of spectral gaps is at most finite for any $\alpha\in(0,c)$.
Case $\alpha<0$ is treated similarly.
\end{proof}

\section{Continued fractions and inverse spectral problem}\label{Section continued fractions}

In this section we examine the irrationality in light of continued fractions, which proves to be a very fruitful approach. It will show that the coefficients in the continued fraction expansion associated to the edge lengths ratio can directly govern the arrangement of spectral gaps.

Recall that a continued fraction~\cite{Kh64} associated to a $\gamma\in\mathbb{R}$ is a representation of the form
\begin{equation}\label{cont.fr.}
\gamma=c_0+\frac{1}{c_1+\frac{1}{c_2+\frac{1}{c_3+\frac{1}{\cdots}}}}\;,
\end{equation}
where $c_0\in\mathbb{Z}$ and $c_j\in\mathbb{N}$ for all $j>0$. Representation~\eqref{cont.fr.} is usually written in a compact form $\gamma=[c_0;c_1,c_2,c_3,c_4,c_5,\ldots]$.
The expansion is finite if and only if $\gamma\in\mathbb{Q}$. Infinite expansions $[c_0;c_1,c_2,c_3,\ldots]$ always converge, which is guaranteed by \cite[Thm.~10]{Kh64}; thus every sequence $\{c_j\}_{n=0}^\infty$ satisfying $c_0\in\mathbb{Z}$ and $c_j\in\mathbb{N}$ for $j>0$ defines a unique real number $\gamma$.
When an infinite expansion~\eqref{cont.fr.} is terminated at the $n$-th position, one gets a rational number of the form $\frac{p_n}{q_n}=[c_0;c_1,c_2,\ldots,c_n]$, which is called the \emph{$n$-th convergent} of the number $\gamma=[c_0;c_1,c_2,c_3,c_4,c_5,\ldots]$. The convergents satisfy
\begin{equation}\label{convergents}
\frac{p_0}{q_0}<\frac{p_2}{q_2}<\frac{p_4}{q_4}<\frac{p_6}{q_6}<\cdots\leq\gamma\leq\cdots<\frac{p_7}{q_7}<\frac{p_5}{q_5}<\frac{p_3}{q_3}<\frac{p_1}{q_1}\;.
\end{equation}

The method we are going to develop in this section is simple in principle, but requires a technical treatment that depends on actual values of the parameters. Therefore, in order to keep the presentation as straightforward as possible, we will focus on the case $\alpha>0$ and reduce the number of parameters by setting the problem more explicit. Namely, we assume that all edge lengths $a_2,\ldots,a_d$ are equal, that is,
$$
a_1=\gamma a, \quad\text{and}\quad a_j=a \quad \text{for all } j=2,\ldots,d,
$$
where the ratio $\gamma=a_1/a_2$ is irrational.
Moreover, we focus only on those $\gamma$ that satisfy $\gamma\in(0,1)$ (i.e., one has $c_0=0$ in~\eqref{cont.fr.}) and the terms $c_j$ ($j\in\mathbb{N}$) in their continued fraction representation \eqref{cont.fr.} attain only two values, say $1$ and $2$, in the following manner:
\begin{equation}\label{gamma}
\begin{array}{ll}
c_{n}=1 & \text{if $n$ is even or $n=1$}; \\
c_{n}\in\{1,2\} & \text{if $n$ is odd}.
\end{array}
\end{equation}
Our goal is to show that the number and arrangement of $2$'s in the continued fraction expansion of $\gamma$ determines the number and positions of gaps in the spectrum of the lattice.

Before proceeding further, let us state a criterion for comparing continued fractions, which will be useful in the sequel. Since its validity is easy to see, the proof is omitted.
\begin{proposition}{Proposition}\label{Prop. ineq}
Let $\beta=[b_0;b_1,b_2,b_3,b_4,b_5,\ldots]$, $\gamma=[c_0;c_1,c_2,c_3,c_4,c_5,\ldots]$ and $j$ be the minimal index such that $b_j\neq c_j$. Then
$$
\beta<\gamma \qquad\Leftrightarrow\qquad (\text{$j$ is even and $b_j<c_j$}) \quad\text{or}\quad (\text{$j$ is odd and $b_j>c_j$}).
$$
If $\beta=[c_0;c_1,c_2,\ldots,c_m]$ and $\gamma=[c_0;c_1,c_2,c_3,c_4,c_5,\ldots]$, then $\beta<\gamma$ if and only if $m$ is even.
\end{proposition}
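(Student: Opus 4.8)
The plan is to exploit the recursive self-similar structure of~\eqref{cont.fr.} and reduce the comparison of $\beta$ and $\gamma$ to a comparison of their $j$-th complete quotients through a single ``head'' map whose monotonicity alternates with parity. Concretely, for $i\ge0$ set $g_i(x)=c_i+\frac1x$ on $(0,+\infty)$, and let $\beta_j=[b_j;b_{j+1},b_{j+2},\ldots]$ and $\gamma_j=[c_j;c_{j+1},c_{j+2},\ldots]$ denote the complete quotients at position $j$. Because the first $j$ coefficients of $\beta$ and $\gamma$ coincide, one has
\[
\beta=(g_0\circ g_1\circ\cdots\circ g_{j-1})(\beta_j),\qquad
\gamma=(g_0\circ g_1\circ\cdots\circ g_{j-1})(\gamma_j),
\]
that is, both numbers are the image of their respective complete quotient under the \emph{same} map $f:=g_0\circ g_1\circ\cdots\circ g_{j-1}$.

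First I would record the elementary monotonicity fact that each $g_i$ is strictly decreasing on $(0,+\infty)$. Since a composition of $j$ strictly decreasing maps is strictly increasing when $j$ is even and strictly decreasing when $j$ is odd, the head map $f$ is strictly increasing for even $j$ and strictly decreasing for odd $j$. This parity alternation is precisely the source of the two-case form of the claimed equivalence.

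Next I would compare the complete quotients themselves using only the integer coefficients $b_j,c_j$. From $\gamma_j=c_j+1/\gamma_{j+1}$ with $\gamma_{j+1}>1$ one obtains the bracketing $c_j<\gamma_j<c_j+1$, and likewise $b_j<\beta_j<b_j+1$. As $b_j$ and $c_j$ are integers with $b_j\neq c_j$, these bounds give $\beta_j<\gamma_j\Leftrightarrow b_j<c_j$. Combining this with the monotonicity of $f$ yields the statement: for even $j$ one has $\beta=f(\beta_j)<f(\gamma_j)=\gamma$ iff $\beta_j<\gamma_j$ iff $b_j<c_j$, while for odd $j$ the inequality reverses, giving $\beta<\gamma$ iff $b_j>c_j$.

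Finally I would treat the truncation clause as the degenerate instance of the same argument: writing $\beta=[c_0;c_1,\ldots,c_m]=f(c_m)$ with $f=g_0\circ\cdots\circ g_{m-1}$ and $\gamma=f(\gamma_m)$, the bracketing gives $\gamma_m=c_m+1/\gamma_{m+1}>c_m$, so applying $f$ (increasing iff $m$ is even) produces $\beta<\gamma$ exactly when $m$ is even. I expect the only delicate point to be the bookkeeping of the parity of the composition together with the strictness of the bracketing $c_j<\gamma_j<c_j+1$; the latter rests on $\gamma_{j+1}>1$, which holds for every genuine tail of an infinite continued fraction but must be checked at the boundary when a tail terminates, and this is exactly the situation isolated in the separate truncation statement.
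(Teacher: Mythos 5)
Your proof is correct. The paper in fact omits the proof of Proposition~\ref{Prop. ineq} entirely (``Since its validity is easy to see, the proof is omitted''), so there is no argument to compare against; your route---writing $\beta$ and $\gamma$ as images of their complete quotients $\beta_j,\gamma_j$ under the common head map $g_0\circ\cdots\circ g_{j-1}$, whose monotonicity alternates with the parity of $j$, and reducing the comparison of $\beta_j$ and $\gamma_j$ to that of the integers $b_j,c_j$ via the bracketing $c_j<\gamma_j<c_j+1$---is the standard one and correctly covers both clauses, including the degenerate case $j=0$ where the head map is the identity. One small refinement worth recording for the way the proposition is actually used later (comparing an infinite expansion with a finite one whose last entry differs, as in the proof of Proposition~\ref{Prop. error}): if an expansion terminates exactly at index $j$, the left bracket degenerates to the equality $\beta_j=b_j$, but the equivalence $\beta_j<\gamma_j\Leftrightarrow b_j<c_j$ survives since only $b_j\le\beta_j<b_j+1$ together with integrality is needed; you already flag this boundary issue yourself.
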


Our analysis will also require estimates of the quantities of type $q(q\gamma-p)$ for $p,q\in\mathbb{N}$:

\begin{proposition}{Proposition}\label{Prop. error}
(i)\; If $p_n/q_n$ is a convergent of $\gamma=[c_0;c_1,c_2,c_3,c_4,c_5,\ldots]$, then
$$
q_n|q_n\gamma-p_n|>\frac{1}{c_{n+1}+\frac{1}{c_{n+2}+\frac{1}{c_{n+3}+1}}+\frac{1}{c_{n}+\frac{1}{c_{n-1}+1}}}>\frac{1}{c_{n+1}+\frac{1}{c_{n+2}}+\frac{1}{c_{n}}}
$$
and
$$
q_n|q_n\gamma-p_n|<\frac{1}{c_{n+1}+\frac{1}{c_{n+2}+1}+\frac{1}{c_{n}+1}}\;.
$$
(ii)\; If $p/q$ lies between convergents $p_{n-1}/q_{n-1}$ and $p_{n+1}/q_{n+1}$ of $\gamma$, then
$$
q|q\gamma-p|>\frac{1}{c_{n+1}}\;.
$$
\end{proposition}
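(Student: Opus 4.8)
The plan is to reduce both parts to the standard exact expression for the approximation error of a continued fraction in terms of its \emph{complete quotient} and the \emph{reversed} ratio of denominators. Writing $\gamma_{n+1}=[c_{n+1};c_{n+2},c_{n+3},\ldots]$ for the complete quotient, the identity $\gamma=\frac{\gamma_{n+1}p_n+p_{n-1}}{\gamma_{n+1}q_n+q_{n-1}}$ together with $p_nq_{n-1}-p_{n-1}q_n=(-1)^{n-1}$ (both standard, see \cite{Kh64}) gives $|q_n\gamma-p_n|=1/(\gamma_{n+1}q_n+q_{n-1})$ and hence the master formula for part~(i),
\[
q_n|q_n\gamma-p_n|=\frac{q_n}{\gamma_{n+1}q_n+q_{n-1}}=\frac{1}{\gamma_{n+1}+\frac{q_{n-1}}{q_n}}\,.
\]
Everything then reduces to two-sided estimates of the denominator $\gamma_{n+1}+q_{n-1}/q_n$.

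For the forward tail I would iterate $\gamma_{n+1}=c_{n+1}+1/\gamma_{n+2}$ with $\gamma_{n+2}\in(c_{n+2},c_{n+2}+1)$ and $\gamma_{n+3}\in(c_{n+3},c_{n+3}+1)$; propagating these elementary bounds through the (monotone) fraction yields $c_{n+1}+\frac{1}{c_{n+2}+1}<\gamma_{n+1}<c_{n+1}+\cfrac{1}{c_{n+2}+\frac{1}{c_{n+3}+1}}$. For the reversed ratio I would unroll $q_n=c_nq_{n-1}+q_{n-2}$ into $r_n:=q_{n-1}/q_n=[0;c_n,c_{n-1},\ldots,c_1]=1/(c_n+r_{n-1})$ with $r_{n-1}=q_{n-2}/q_{n-1}\in(0,1)$ and $r_{n-1}>1/(c_{n-1}+1)$, giving $\frac{1}{c_n+1}<\frac{q_{n-1}}{q_n}<\cfrac{1}{c_n+\frac{1}{c_{n-1}+1}}$. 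Adding the two \emph{upper} bounds and taking reciprocals produces the first (sharp) inequality of~(i); adding the two \emph{lower} bounds and taking reciprocals produces the third inequality. The middle inequality is then the purely algebraic observation that replacing $c_{n+2}+\frac{1}{c_{n+3}+1}$ by $c_{n+2}$ and $c_n+\frac{1}{c_{n-1}+1}$ by $c_n$ strictly enlarges the denominator, hence shrinks the reciprocal.

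For part~(ii) I would exploit the mediant structure of the two same-parity convergents $p_{n-1}/q_{n-1}$ and $p_{n+1}/q_{n+1}$, which lie on the same side of $\gamma$ with $p_{n+1}/q_{n+1}$ the closer, and whose ``determinant'' is $|p_{n+1}q_{n-1}-p_{n-1}q_{n+1}|=c_{n+1}$, so they span an interval of length $c_{n+1}/(q_{n-1}q_{n+1})$. If $p/q$ lies strictly between them, then $|pq_{n\pm1}-p_{n\pm1}q|\ge1$ forces each distance to a convergent to be at least $1/(qq_{n-1})$ and $1/(qq_{n+1})$ respectively; summing and matching against the interval length gives $q_{n-1}+q_{n+1}\le c_{n+1}q$, i.e.\ $q\ge(q_{n-1}+q_{n+1})/c_{n+1}>q_{n+1}/c_{n+1}$. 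Since $p_{n+1}/q_{n+1}$ separates $p/q$ from $\gamma$, one also has $|\gamma-\tfrac{p}{q}|>|\tfrac{p_{n+1}}{q_{n+1}}-\tfrac{p}{q}|\ge 1/(qq_{n+1})$, whence $|q\gamma-p|>1/q_{n+1}$. Multiplying, $q|q\gamma-p|>\frac{q}{q_{n+1}}>\frac{1}{c_{n+1}}$, as claimed.

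The routine parts are the classical continued-fraction identities and the monotonicity of a finite continued fraction in each of its entries; the thing to get right is the \emph{direction} of every estimate when propagating the tail bounds through the nested fractions (an upper bound on $\gamma_{n+3}$ yields an \emph{upper} bound on $\gamma_{n+1}$, and dually for $r_n$), and, in part~(ii), the fact that it is $p_{n+1}/q_{n+1}$ rather than $p_{n-1}/q_{n-1}$ that lies between $p/q$ and $\gamma$. The only genuine care is needed for the finitely many smallest indices $n$, where the reversed expansion $[0;c_n,\ldots,c_1]$ is too short for $r_{n-1}\in(0,1)$ to be strict (for instance $c_1=1$ gives $r_1=1$); there a bound may degrade to an equality, but these finitely many cases are immaterial for the intended application to infinitely many $m$ and can be verified directly if required.
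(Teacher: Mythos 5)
Your proposal is correct and follows essentially the same route as the paper: your ``master formula'' $q_n|q_n\gamma-p_n|=1/(\gamma_{n+1}+q_{n-1}/q_n)$ is exactly the identity the paper quotes from Han\v{c}l (since $[0;c_n,\ldots,c_1]=q_{n-1}/q_n$), your two-sided tail estimates reproduce what the paper gets from its Proposition on comparing continued fractions, and your part~(ii) argument (determinant bounds $|pq_{n\pm1}-p_{n\pm1}q|\geq1$ against the interval length $c_{n+1}/(q_{n-1}q_{n+1})$, then $q|q\gamma-p|>q/q_{n+1}>1/c_{n+1}$) is the paper's argument almost verbatim. No substantive differences.
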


\begin{proof}
In order to prove (i), we start from the general formula \cite[Eq.~(2.2)]{Ha16}
\begin{equation}\label{Hancl}
\gamma-\frac{p_n}{q_n}=\frac{(-1)^n}{q_n^2\left([c_{n+1};c_{n+2},\ldots]+[0;c_n,c_{n-1},\ldots,c_1]\right)}\;,
\end{equation}
which follows immediately from~\cite[p.~10]{Sch80}.
Equation~\eqref{Hancl} together with Proposition~\ref{Prop. ineq} applied on the continued fractions in the denominator of~\eqref{Hancl} leads to the sought estimates
$$
q_n|q_n\gamma-p_n|>\frac{1}{[c_{n+1};c_{n+2},c_{n+3}+1]+[0;c_n,c_{n-1}+1]}>\frac{1}{[c_{n+1};c_{n+2}]+[0;c_n]}
$$
and
$$
q_n|q_n\gamma-p_n|<\frac{1}{[c_{n+1};c_{n+2}+1]+[0;c_n+1]}\;.
$$
Let us proceed to the proof of (ii).
By assumption, $p/q$ lies between $p_{n-1}/q_{n-1}$ and $p_{n+1}/q_{n+1}$; thus \eqref{convergents} implies
\begin{equation}\label{gamma est.}
\left|\gamma-\frac{p}{q}\right|\geq\left|\frac{p_{n+1}}{q_{n+1}}-\frac{p}{q}\right|=\frac{|qp_{n+1}-pq_{n+1}|}{q\cdot q_{n+1}}\;.
\end{equation}
Since $|qp_{n+1}-pq_{n+1}|$ is a nonzero integer, its value is greater or equal to $1$. Hence~\eqref{gamma est.} gives
\begin{equation}\label{est.}
q|q\gamma-p|\geq\frac{q}{q_{n+1}}\;.
\end{equation}
In view of estimating $\frac{q}{q_{n+1}}$, we will examine the term $\left|\frac{p}{q}-\frac{p_{n-1}}{q_{n-1}}\right|$ in two ways. At first, one has
\begin{equation}\label{lower est.}
\left|\frac{p}{q}-\frac{p_{n-1}}{q_{n-1}}\right|=\frac{|pq_{n-1}-qp_{n-1}|}{q\cdot q_{n-1}}\geq\frac{1}{q\cdot q_{n-1}}\;,
\end{equation}
which is again obtained using the trivial estimate $|pq_{n-1}-qp_{n-1}|\geq1$. Secondly, since $p/q$ lies between $p_{n-1}/q_{n-1}$ and $p_{n+1}/q_{n+1}$, one gets
\begin{equation}\label{upper est.}
\left|\frac{p}{q}-\frac{p_{n-1}}{q_{n-1}}\right|<\left|\frac{p_{n+1}}{q_{n+1}}-\frac{p_{n-1}}{q_{n-1}}\right|=\frac{c_{n+1}}{q_{n+1}q_{n-1}}\;,
\end{equation}
where the last equality holds due to a known formula $\frac{p_{k-2}}{q_{k-2}}-\frac{p_k}{q_k}=\frac{(-1)^{k-1}c_k}{q_kq_{k-2}}$, see~\cite[eq.~(10)]{Kh64}.
Inequalities \eqref{lower est.} and \eqref{upper est.} together imply
$$
\frac{q}{q_{n+1}}>\frac{1}{c_{n+1}}\;.
$$
Plugging this estimate in~\eqref{est.}, one gets the sought inequality $q|q\gamma-p|>1/c_{n+1}$.
\end{proof}

The presence of spectral gaps is determined by condition~\eqref{gap F}, which contains terms $m\tan\left(\frac{\pi}{2}(m\gamma-\lfloor m\gamma\rfloor)\right)$ with $m\in\mathbb{N}$. The following technical lemma provides their estimates for our particular choice of $\gamma$, cf.~\eqref{gamma}.

\begin{lemma}{Lemma}\label{Levels beta}
Let $\gamma=[0;1,1,c_3,1,c_5,1,c_7,1,\ldots]$, where $c_{2n+1}\in\{1,2\}$ for all $n\in\mathbb{N}$.
For every $m\in\mathbb{N}$, we have:
\begin{itemize}
\item[(i)] If $\lfloor m\gamma\rfloor/m<\gamma$ is not a convergent of $\gamma$, then $\frac{2}{\pi}m\tan\left(\frac{\pi}{2}(m\gamma-\lfloor m\gamma\rfloor)\right)>1$.
\item[(ii)] If $\lfloor m\gamma\rfloor/m$ is the $2n$-th convergent of $\gamma$ for some $n$ (i.e., $\lfloor m\gamma\rfloor=p_{2n}$, $m=q_{2n}$) and $c_{2n+1}=1$, then $\frac{2}{\pi}m\tan\left(\frac{\pi}{2}(m\gamma-\lfloor m\gamma\rfloor)\right)>\frac{2}{5}$.
\item[(iii)] If $\lfloor m\gamma\rfloor/m$ is the $2n$-th convergent of $\gamma$ for some $n$ and $c_{2n+1}=2$, then $\frac{2}{\pi}m\tan\left(\frac{\pi}{2}(m\gamma-\lfloor q_{2n}\gamma\rfloor)\right)<\frac{4}{\pi}(2-\sqrt{3})\approx0.341$.
\end{itemize}
\end{lemma}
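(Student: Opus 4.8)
The plan is to reduce all three estimates to bounds on the single quantity $m\delta_m$, where $\delta_m:=m\gamma-\lfloor m\gamma\rfloor\in[0,1)$ is the fractional part, by writing
$$
\frac{2}{\pi}\,m\tan\!\left(\frac{\pi}{2}\delta_m\right)=m\delta_m\cdot\frac{\tan\!\left(\frac{\pi}{2}\delta_m\right)}{\frac{\pi}{2}\delta_m}\,.
$$
Setting $q=m$ and $p=\lfloor m\gamma\rfloor$ we have $m\delta_m=q(q\gamma-p)$, which is exactly the quantity controlled by Proposition~\ref{Prop. error}. The factor $\tfrac{\tan x}{x}$ is $\geq 1$ and increasing on $(0,\pi/2)$; hence for the lower bounds (i) and (ii) I would simply drop it (equivalently, use $\tan x\geq x$), reducing the task to a \emph{lower} bound on $q(q\gamma-p)$, whereas for the upper bound (iii) I would exploit its monotonicity to trade the amplification for an evaluation at a fixed argument.

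For part (i), note that $p/q=\lfloor m\gamma\rfloor/m<\gamma$ is a lower approximation. By \eqref{convergents} the even-indexed convergents are precisely those lying below $\gamma$ and increasing to it; since $p/q$ is by hypothesis not a convergent and $p/q>0=p_0/q_0$, it lies strictly between two consecutive even convergents $p_{n-1}/q_{n-1}$ and $p_{n+1}/q_{n+1}$ with $n$ odd. Then $c_{n+1}$ carries an even index, so $c_{n+1}=1$ by \eqref{gamma}, and Proposition~\ref{Prop. error}(ii) yields $q(q\gamma-p)>1/c_{n+1}=1$. With $\tan x\geq x$ this gives $\tfrac{2}{\pi}m\tan(\tfrac{\pi}{2}\delta_m)\geq m\delta_m=q(q\gamma-p)>1$.

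For part (ii) the convergent $p_{2n}/q_{2n}$ has $c_{2n}=c_{2n+2}=1$ (even indices) and $c_{2n+1}=1$ by hypothesis. Here the crude lower bound $\bigl(c_{2n+1}+1/c_{2n+2}+1/c_{2n}\bigr)^{-1}=\tfrac13$ from Proposition~\ref{Prop. error}(i) is \emph{not} sufficient, and this is the delicate point of the lemma. Instead I would use the sharper nested lower bound of Proposition~\ref{Prop. error}(i) and insert the worst-case values $c_{2n-1}=c_{2n+3}=2$ allowed by \eqref{gamma}:
$$
q_{2n}(q_{2n}\gamma-p_{2n})>\frac{1}{1+\dfrac{1}{1+\frac{1}{c_{2n+3}+1}}+\dfrac{1}{1+\frac{1}{c_{2n-1}+1}}}\geq\frac{1}{1+\frac34+\frac34}=\frac25\,,
$$
after which $\tan x\geq x$ closes the argument (the degenerate case $n=0$, with empty backward fraction, is handled directly by $q_0(q_0\gamma-p_0)=\gamma>\tfrac12>\tfrac25$). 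The near-coincidence $1+\tfrac34+\tfrac34=\tfrac52$ is precisely what produces the constant $\tfrac25$, so the worst-case bookkeeping of the neighbouring partial quotients is the genuine crux of the whole lemma.

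For part (iii) the convergent again has $c_{2n}=c_{2n+2}=1$, but now $c_{2n+1}=2$; the crude upper bound of Proposition~\ref{Prop. error}(i) immediately gives
$$
m\delta_m=q_{2n}(q_{2n}\gamma-p_{2n})<\frac{1}{c_{2n+1}+\frac{1}{c_{2n+2}+1}+\frac{1}{c_{2n}+1}}=\frac{1}{2+\frac12+\frac12}=\frac13\,.
$$
Since $n\geq1$ forces $m=q_{2n}\geq q_2=2$, this also yields $\delta_m<\tfrac{1}{3m}\leq\tfrac16$, so the argument $\tfrac{\pi}{2}\delta_m$ stays below $\pi/12$. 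Using the monotonicity of $\tan(x)/x$ on $[0,\pi/12]$ together with $\tan(\pi/12)=2-\sqrt3$, I would conclude
$$
\frac{2}{\pi}m\tan\!\left(\frac{\pi}{2}\delta_m\right)=m\delta_m\cdot\frac{\tan\!\left(\frac{\pi}{2}\delta_m\right)}{\frac{\pi}{2}\delta_m}<\frac13\cdot\frac{\tan(\pi/12)}{\pi/12}=\frac{4}{\pi}(2-\sqrt3)\,,
$$
which is the asserted bound. The only real obstacle across the three parts is the sharp constant in (ii): it cannot be obtained from the elementary estimates and requires the refined nested form of Proposition~\ref{Prop. error}(i) evaluated at the extremal admissible partial quotients.
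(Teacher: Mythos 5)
Your treatment of parts (ii) and (iii) coincides with the paper's proof in every essential respect: the same reduction via $\tan x\geq x$ (resp.\ via the linear bound for $\tan$ on $[0,\pi/12]$), the same use of the nested lower bound of Proposition~\ref{Prop. error}(i) with the worst-case neighbours $c_{2n-1}=c_{2n+3}=2$ producing $\bigl(1+\tfrac34+\tfrac34\bigr)^{-1}=\tfrac25$, and the same chain $m\delta_m<\tfrac13$, $\delta_m<\tfrac16$, $\tan(\pi/12)=2-\sqrt3$ for the upper bound. Your explicit handling of the degenerate index $n=0$ in (ii) is a small point the paper leaves implicit.

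Part (i), however, has a genuine gap. From ``$\lfloor m\gamma\rfloor/m$ is not a convergent'' you conclude that the fraction lies \emph{strictly between} two consecutive even-indexed convergents and then invoke Proposition~\ref{Prop. error}(ii). This misses the $m$ for which $\lfloor m\gamma\rfloor=h\,p_{2n}$ and $m=h\,q_{2n}$ with an integer $h\geq2$: for such $m$ the fraction $\lfloor m\gamma\rfloor/m$ \emph{equals} the convergent $p_{2n}/q_{2n}$ as a rational number, so it does not lie strictly between convergents and Proposition~\ref{Prop. error}(ii) does not apply; yet the pair $(\lfloor m\gamma\rfloor,m)$ is not a convergent pair, so these $m$ are covered by neither (ii) nor (iii) of the Lemma (which require $m=q_{2n}$ exactly) and must be absorbed into (i) for the Lemma to be exhaustive --- and exhaustiveness is precisely what the proof of Theorem~\ref{Gap criterion} needs when it asserts $\mathrm{LHS}>\tfrac25$ ``otherwise''. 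The paper closes this case (its equation~\eqref{k nasobek}) by the quadratic amplification
$$
m(m\gamma-\lfloor m\gamma\rfloor)=h^2\,q_{2n}(q_{2n}\gamma-p_{2n})>h^2\cdot\frac{1}{c_{2n+1}+\frac{1}{c_{2n+2}}+\frac{1}{c_{2n}}}\geq\frac{h^2}{4}\geq1\,,
$$
using the crude lower bound of Proposition~\ref{Prop. error}(i) together with $h\geq2$. You need to add this subcase; with it, the rest of your argument goes through.
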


\begin{proof}
(i)\; A trivial estimate $\tan x\geq x$, which is valid for any $x\geq0$, gives
\begin{equation}\label{tan estimate}
\frac{2}{\pi}m\tan\left(\frac{\pi}{2}(m\gamma-\lfloor m\gamma\rfloor)\right)\geq\frac{2}{\pi}m\cdot\frac{\pi}{2}(m\gamma-\lfloor m\gamma\rfloor)=m(m\gamma-\lfloor m\gamma\rfloor).
\end{equation}
If $\lfloor m\gamma\rfloor/m$ is not a convergent of $\gamma$, then either $\lfloor m\gamma\rfloor/m$ lies between two convergents of $\gamma$ that are smaller than $\gamma$, i.e.,
\begin{equation}\label{mezi}
\frac{p_{2n-2}}{q_{2n-2}}<\frac{\lfloor m\gamma\rfloor}{m}<\frac{p_{2n}}{q_{2n}} \quad \text{for some $n\in\mathbb{N}$},
\end{equation}
or $\lfloor m\gamma\rfloor/m$ satisfies
\begin{equation}\label{k nasobek}
\lfloor m\gamma\rfloor=h\cdot p_{2n}\,,\quad m=h\cdot q_{2n} \qquad \text{for some $n\in\mathbb{N}$ and $h\in\mathbb{N}$, $h\geq2$}.
\end{equation}
If \eqref{mezi} is true, Proposition~\ref{Prop. error}~(ii) gives $m|m\gamma-\lfloor m\gamma\rfloor|>1/c_{2n}=1$.
If \eqref{k nasobek} is true, we use Proposition~\ref{Prop. error}~(i), which implies
\begin{equation*}
\begin{split}
m|m\gamma-\lfloor m\gamma\rfloor|&=hq_{2n}|hq_{2n}\gamma-hp_{2n}|=h^2\cdot q_{2n}|q_{2n}\gamma-p_{2n}|>h^2\cdot\frac{1}{c_{2n+1}+\frac{1}{c_{2n+2}}+\frac{1}{c_{2n}}} \\
&\geq h^2\cdot\frac{1}{2+1+1}=\frac{h^2}{4}\geq1.
\end{split}
\end{equation*}
To sum up, $\frac{2}{\pi}m\tan\left(\frac{\pi}{2}(m\gamma-\lfloor m\gamma\rfloor)\right)>m(m\gamma-\lfloor m\gamma\rfloor)>1$ in either case.

\noindent (ii)\; We start again from estimate~\eqref{tan estimate},
$$
\frac{2}{\pi}m\tan\left(\frac{\pi}{2}(m\gamma-\lfloor m\gamma\rfloor)\right)\geq m(m\gamma-\lfloor m\gamma\rfloor)=q_{2n}(q_{2n}\gamma-p_{2n}).
$$
Now we use Proposition~\ref{Prop. error}~(i), which gives
$$
q_{2n}(q_{2n}\gamma-p_{2n})>\frac{1}{c_{2n+1}+\frac{1}{c_{2n+2}+\frac{1}{c_{2n+3}+1}}+\frac{1}{c_{2n}+\frac{1}{c_{2n-1}+1}}}\geq\frac{1}{1+\frac{1}{1+\frac{1}{2+1}}+\frac{1}{1+\frac{1}{2+1}}}=\frac{2}{5}.
$$

\noindent (iii)\; At first, Proposition~\ref{Prop. error}~(i) is used to estimate the quantity
\begin{equation}\label{est.1/3}
q_{2n}(q_{2n}\gamma-\lfloor q_{2n}\gamma\rfloor)=q_{2n}(q_{2n}\gamma-p_{2n})<\frac{1}{c_{2n+1}+\frac{1}{c_{2n+2}+1}+\frac{1}{c_{2n}+1}}\leq\frac{1}{2+\frac{1}{1+1}+\frac{1}{1+1}}=\frac{1}{3}.
\end{equation}
This estimate combined with a trivial inequality $q_{2n}\geq q_2=2$ implies $\gamma-\lfloor q_{2n}\gamma\rfloor<1/6$. Therefore,
$$
\frac{\pi}{2}(q_{2n}\gamma-\lfloor q_{2n}\gamma\rfloor)<\frac{\pi}{2}\cdot\frac{1}{6}=\frac{\pi}{12}.
$$
Since $\tan\frac{\pi}{12}=2-\sqrt{3}$, one has $\tan x\leq\frac{12}{\pi}(2-\sqrt{3})x$ for all $x\in[0,\pi/12)$; hence
\begin{multline*}
\frac{2}{\pi}q_{2n}\tan\left(\frac{\pi}{2}(q_{2n}\gamma-\lfloor q_{2n}\gamma\rfloor)\right)
\leq\frac{2}{\pi}q_{2n}\frac{12}{\pi}(2-\sqrt{3})\cdot\frac{\pi}{2}(q_{2n}\gamma-\lfloor q_{2n}\gamma\rfloor) \\
=\frac{12}{\pi}(2-\sqrt{3})\cdot q_{2n}(q_{2n}\gamma-\lfloor q_{2n}\gamma\rfloor)
<\frac{12}{\pi}(2-\sqrt{3})\cdot\frac{1}{3}
=\frac{4}{\pi}(2-\sqrt{3})\approx0.341,
\end{multline*}
where \eqref{est.1/3} was used.
\end{proof}

We will also need a similar result for the value $\gamma^{-1}$:

\begin{lemma}{Lemma}\label{Levels 1/beta}
Let $\gamma=[0;1,1,c_3,1,c_5,1,c_7,1,\ldots]$, where $c_{2n+1}\in\{1,2\}$ for all $n\in\mathbb{N}$. Then
$$
\frac{2}{\pi}m\tan\left(\frac{\pi}{2}(m\gamma^{-1}-\lfloor m\gamma^{-1}\rfloor)\right)>\frac{1}{3}
$$
for all $m\in\mathbb{N}$.
\end{lemma}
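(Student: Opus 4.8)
The plan is to reduce the trigonometric estimate to a purely arithmetic Diophantine one via $\tan x\geq x$, and then to bound the resulting quantity \emph{uniformly} by exploiting the special shape of the continued fraction of $\gamma^{-1}$. First I would record that expansion: since $\gamma=[0;c_1,c_2,c_3,\ldots]$ with $c_1=1$, taking the reciprocal shifts the indices, giving $\gamma^{-1}=[c_1;c_2,c_3,\ldots]=[1;1,c_3,1,c_5,1,\ldots]$. Writing $\gamma^{-1}=[d_0;d_1,d_2,\ldots]$, we have $d_0=d_1=1$ and $d_k=c_{k+1}$ for $k\geq2$; hence $d_k=1$ whenever $k$ is odd, while $d_k\in\{1,2\}$ when $k$ is even. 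The decisive structural fact, used throughout, is that \emph{all odd-indexed partial quotients of $\gamma^{-1}$ equal $1$}.

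By the elementary inequality $\tan x\geq x$ for $x\geq0$, exactly as in \eqref{tan estimate}, it suffices to prove the arithmetic statement $m\left(m\gamma^{-1}-\lfloor m\gamma^{-1}\rfloor\right)>\tfrac13$ for every $m\in\mathbb{N}$. I would then classify the fraction $\lfloor m\gamma^{-1}\rfloor/m$, which is automatically smaller than $\gamma^{-1}$ and therefore compares only with the even convergents $P_{2n}/Q_{2n}$ of $\gamma^{-1}$ (the odd convergents exceed $\gamma^{-1}$), following the proof of Lemma~\ref{Levels beta}. Three cases arise: (a) $\lfloor m\gamma^{-1}\rfloor/m$ lies strictly between two consecutive even convergents; (b) it equals an even convergent but with $m=hQ_{2n}$ a proper multiple, $h\geq2$; and (c) it \emph{is} an even convergent, $m=Q_{2n}$.

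In each case I would invoke Proposition~\ref{Prop. error} applied to $\gamma^{-1}$. In case (a), part (ii) gives $m|m\gamma^{-1}-\lfloor m\gamma^{-1}\rfloor|>1/d_{2n}\geq\tfrac12$. In case (b), part (i) yields $h^2\,Q_{2n}|Q_{2n}\gamma^{-1}-P_{2n}|>h^2\cdot\frac{1}{d_{2n+1}+1/d_{2n+2}+1/d_{2n}}\geq h^2/3\geq\tfrac43$. In the binding case (c), part (i) gives
$$
Q_{2n}(Q_{2n}\gamma^{-1}-P_{2n})>\frac{1}{d_{2n+1}+\frac{1}{d_{2n+2}}+\frac{1}{d_{2n}}}\geq\frac{1}{1+1+1}=\frac13,
$$
where the last inequality is precisely the place where $d_{2n+1}=1$ is essential: the leading coefficient being $1$ and the two remaining reciprocal terms being at most $1$, the denominator never exceeds $3$. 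The value $m=1$ (and the degenerate $0$-th convergent $P_0/Q_0=1$, for which the proposition's indices are undefined) I would dispose of by hand, noting $\gamma^{-1}\in(3/2,2)$ so that $\gamma^{-1}-1>\tfrac12>\tfrac13$; one checks directly that this value-$1$ convergent can occur only for $m=1$.

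The crux — rather than a genuine obstacle — is understanding why the uniform lower bound survives here, whereas the analogous quantity for $\gamma$ in Lemma~\ref{Levels beta}(iii) had to be bounded from \emph{above}. Passing to the reciprocal moves the potentially large partial quotients $c_{2n+1}=2$ of $\gamma$ from odd positions to the even positions $d_{2n},d_{2n+2}$ of $\gamma^{-1}$; there they enter Proposition~\ref{Prop. error}(i) only through the reciprocals $1/d_{2n},1/d_{2n+2}$, which can only shrink the denominator and hence \emph{enlarge} the lower bound. Meanwhile the leading term $d_{2n+1}$ is forced to be $1$, keeping the denominator $\leq3$ uniformly. Once this is noticed, the estimate is immediate and requires no delicate cancellation.
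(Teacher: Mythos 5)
Your proof is correct and follows essentially the same route as the paper's: reduce via $\tan x\geq x$ to the arithmetic bound, split into the three cases (convergent, proper multiple of a convergent, strictly between convergents), and apply Proposition~\ref{Prop. error} to $\gamma^{-1}$, where the key point is that the odd-indexed partial quotients of $\gamma^{-1}$ all equal $1$. Your explicit treatment of the $n=0$ convergent $P_0/Q_0=1$ (i.e., $m=1$) is a small point the paper passes over silently, but otherwise the two arguments coincide.
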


\begin{proof}
One proceeds similarly as in the proof of Lemma~\ref{Levels beta}, applying Proposition~\ref{Prop. error} on $\gamma^{-1}=[1;1,c_3,1,c_5,1,c_7,1,\ldots]$. Let us denote $\gamma^{-1}=[1;c'_1,c'_2,c'_3,c'_4,c'_5,c'_6,c'_7,\ldots]$, i.e., $c'_j=c_{j+1}$ for all $j$.

We start from a trivial estimate
\begin{equation}\label{tg est.}
\frac{2}{\pi}m\tan\left(\frac{\pi}{2}(m\gamma^{-1}-\lfloor m\gamma^{-1}\rfloor)\right)\geq\frac{2}{\pi}m\cdot\frac{\pi}{2}(m\gamma^{-1}-\lfloor m\gamma^{-1}\rfloor)=m(m\gamma^{-1}-\lfloor m\gamma^{-1}\rfloor),
\end{equation}
and then distinguish $3$ possible cases:

\begin{itemize}
\item[(i)] $\lfloor m\gamma^{-1}\rfloor/m$ is a convergent of $\gamma^{-1}$; i.e., $\lfloor m\gamma^{-1}\rfloor=p'_{2n}$, $m=q'_{2n}$ for some $n$. Then Proposition~\ref{Prop. error}~(i) implies
\begin{multline*}
m(m\gamma^{-1}-\lfloor m\gamma^{-1}\rfloor)=q'_{2n}(q'_{2n}\gamma^{-1}-p'_{2n})>\frac{1}{c'_{2n+1}+\frac{1}{c'_{2n+2}}+\frac{1}{c'_{2n}}} \\
=\frac{1}{c_{2n+2}+\frac{1}{c_{2n+3}}+\frac{1}{c_{2n+1}}}\geq\frac{1}{1+\frac{1}{1}+\frac{1}{1}}=\frac{1}{3}.
\end{multline*}

\item[(ii)] $\lfloor m\gamma^{-1}\rfloor=hp'_{2n}$ and $m=hq'_{2n}$ for some $n\in\mathbb{N}$ and $h\geq2$; thus
$$
m|m\gamma^{-1}-\lfloor m\gamma^{-1}\rfloor|=hq'_{2n}|hq'_{2n}\gamma^{-1}-hp'_{2n}|=h^2\cdot q'_{2n}|q'_{2n}\gamma^{-1}-p'_{2n}|.
$$
Proposition~\ref{Prop. error}~(i) allows to estimate $\gamma$ as follows,
$$
h^2\cdot q'_{2n}|q'_{2n}\gamma^{-1}-p'_{2n}|>h^2\cdot\frac{1}{c'_{2n+1}+\frac{1}{c'_{2n+2}}+\frac{1}{c'_{2n}}}\geq h^2\cdot\frac{1}{c_{2n+2}+1+1}=\frac{h^2}{3}\geq\frac{4}{3}\;.
$$

\item[(iii)] $\lfloor m\gamma^{-1}\rfloor/m$ lies between convergents, i.e.,
$$
\frac{p'_{2n-2}}{q'_{2n-2}}<\frac{\lfloor m\gamma^{-1}\rfloor}{m}<\frac{p'_{2n}}{q'_{2n}} \quad \text{for some $n\in\mathbb{N}$}.
$$
Then Proposition~\ref{Prop. error}~(ii) implies $m|m\gamma-\lfloor m\gamma^{-1}\rfloor|>1/c'_{2n}=1/c_{2n+1}\geq1/2$.
\end{itemize}
In all cases the estimate~\eqref{tg est.} gives $\frac{2}{\pi}m\tan\left(\frac{\pi}{2}(m\gamma^{-1}-\lfloor m\gamma^{-1}\rfloor)\right)>\frac{1}{3}$.
\end{proof}

Now we are ready to formulate the main result of this section.
If the parameter $\alpha$ (the strength of the $\delta$ coupling in the vertices) is properly chosen, the number of $2$'s in the continued fraction expansion of $\gamma$ is equal to the number of spectral gaps. Furthermore, the positions of $2$'s directly govern the positions of gaps via the denominators of the convergents of $\gamma$.

\begin{theorem}{Theorem}\label{Gap criterion}
Let $\gamma=[0;1,1,c_3,1,c_5,1,c_7,1,\ldots]$, where $c_{2n+1}\in\{1,2\}$ for all $n\in\mathbb{N}$. Consider a $d$-dimensional quantum lattice graph with edge lengths $a_1=\gamma a$ and $a_j=a$ for $j=2,\ldots,d$, which has a $\delta$-coupling with parameter $\alpha\in\left[\frac{4\pi(2-\sqrt{3})}{a},\frac{2\pi^2}{5a}\right]$ in each vertex.
Then $k^2$ is a lower endpoint of a spectral gap if and only if $k^2=\left(\frac{q_{2n}\pi}{a}\right)^2$ and $c_{2n+1}=2$, where $q_{2n}$ denotes the denominator of the $(2n)$-th convergent of $\gamma$.
\end{theorem}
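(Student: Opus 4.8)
The plan is to invoke Theorem~\ref{Thm. gaps F}: since $\alpha>0$, a point $k^2$ is the lower endpoint of a spectral gap exactly when $k$ is a discontinuity of $F$ and the one-sided limit $\lim_{k'\searrow k}F(k')$ is strictly smaller than $\alpha$. By Lemma~\ref{Lemma F}(i) the discontinuities are the points $k=m\pi/a_\ell$; for our edge lengths these split into two families that are disjoint because $\gamma$ is irrational, namely the points $k=m\pi/a$ arising from $\ell\in\{2,\ldots,d\}$ (``type~A'') and the points $k=m\pi/(\gamma a)$ arising from $\ell=1$ (``type~B''). The whole proof reduces to computing the right limit of $F$ at each kind of point and comparing it with $\alpha$.

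For a type~A point, every term of $F$ with $j\geq2$ vanishes in the limit (its argument tends to $0$ from above), while the $j=1$ term is continuous there because $m\gamma\notin\mathbb{Z}$; hence $\lim_{k'\searrow m\pi/a}F(k')=\frac{\pi^2}{a}\cdot\frac{2}{\pi}m\tan\!\left(\frac{\pi}{2}(m\gamma-\lfloor m\gamma\rfloor)\right)$. Writing $\tau:=a\alpha/\pi^2$, the hypothesis on $\alpha$ reads exactly $\tau\in\left[\frac{4(2-\sqrt3)}{\pi},\frac{2}{5}\right]$, and the gap condition at this point becomes $\frac{2}{\pi}m\tan\!\left(\frac{\pi}{2}(m\gamma-\lfloor m\gamma\rfloor)\right)<\tau$. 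I would now feed the three cases of Lemma~\ref{Levels beta} into this inequality: in case~(i) the left side exceeds $1>\tau$ and in case~(ii) ($m=q_{2n}$, $c_{2n+1}=1$) it exceeds $\frac{2}{5}\geq\tau$, so no gap forms; in case~(iii) ($m=q_{2n}$, $c_{2n+1}=2$) it stays below $\frac{4(2-\sqrt3)}{\pi}\leq\tau$, so a gap does form. This already yields the asserted characterization restricted to type~A points: $k=m\pi/a$ is a lower gap endpoint iff $m=q_{2n}$ with $c_{2n+1}=2$.

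Next I would rule out the type~B points. There the $d-1$ terms with $j\geq2$ each have argument governed by $\gamma^{-1}=a/a_1$, giving $\lim_{k'\searrow m\pi/(\gamma a)}F(k')=\frac{(d-1)\pi^2}{\gamma a}\cdot\frac{2}{\pi}m\tan\!\left(\frac{\pi}{2}(m\gamma^{-1}-\lfloor m\gamma^{-1}\rfloor)\right)$, and Lemma~\ref{Levels 1/beta} bounds the trailing factor below by $\frac13$. It then suffices to check that this limit is at least $\alpha$. Using $d\geq2$ together with the bound $\gamma<\frac23$---which follows from $\gamma<p_3/q_3=[0;1,1,c_3]=\frac{c_3+1}{2c_3+1}\leq\frac23$ via Proposition~\ref{Prop. ineq} and the structure~\eqref{gamma}---one obtains $\frac{(d-1)\pi^2}{3\gamma a}>\frac{2\pi^2}{5a}\geq\alpha$, so the limit strictly exceeds $\alpha$ and no type~B point is a gap endpoint.

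Combining the two families gives the theorem: the lower gap endpoints are precisely the type~A points $k=q_{2n}\pi/a$ with $c_{2n+1}=2$. Since the genuinely delicate number-theoretic estimates are already carried out in Lemmas~\ref{Levels beta} and~\ref{Levels 1/beta}, the chief obstacle here is bookkeeping rather than a hard estimate: one must verify that the closed interval $\left[\frac{4\pi(2-\sqrt3)}{a},\frac{2\pi^2}{5a}\right]$ is exactly compatible with the two-sided strict bounds of Lemma~\ref{Levels beta}, i.e.\ that case~(iii) stays below $\tau$ and case~(ii) stays above $\tau$ even when $\tau$ sits at the respective endpoint of the interval. The secondary point demanding care is the type~B estimate, where the multiplicity factor $d-1$ must be retained and the bound $\gamma<\frac23$ supplied to close the inequality in the tightest case $d=2$.
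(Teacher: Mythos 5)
Your proposal is correct and follows essentially the same route as the paper: reduce via Theorem~\ref{Thm. gaps F} to the two families of candidate lower endpoints, apply Lemma~\ref{Levels beta} to the points $\left(\frac{m\pi}{a}\right)^2$ to obtain the stated criterion, and exclude the points $\left(\frac{m\pi}{\gamma a}\right)^2$ using Lemma~\ref{Levels 1/beta} together with the bound $\gamma<\frac{2}{3}$ and $d\geq2$. The computations of the one-sided limits, the reduction to the normalized quantity $a\alpha/\pi^2$, and the handling of the interval endpoints via the strict inequalities of the lemmas all match the paper's argument.
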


\begin{proof}
The lattice has two types of edge lengths: $\gamma a$ and $a$. Therefore, due to Theorem~\ref{Thm. gaps F}, the spectral gaps can have two types of the lower endpoints: $\left(\frac{m\pi}{\gamma a}\right)^2$ and $\left(\frac{m\pi}{a}\right)^2$, where $m\in\mathbb{N}$. Let us examine each possibility.

(i)\; A gap with lower endpoint at $\left(\frac{m\pi}{\gamma a}\right)^2$ occurs if and only if
$$
\frac{2m\pi}{\gamma a}\left[\tan\left(\frac{\pi}{2}\left(m\frac{a}{a}-\left\lfloor m\frac{a}{a}\right\rfloor\right)\right)+(d-1)\tan\left(\frac{\pi}{2}\left(m\frac{a}{\gamma a}-\left\lfloor m\frac{a}{\gamma a}\right\rfloor\right)\right)\right]<\alpha,
$$
(see Theorem~\ref{Thm. gaps F}),
which is equivalent to
\begin{equation}\label{gap I}
\frac{2}{\pi}m\tan\left(\frac{\pi}{2}\left(m\gamma^{-1}-\left\lfloor m\gamma^{-1}\right\rfloor\right)\right)<\frac{\gamma}{d-1}\cdot\frac{a\alpha}{\pi^2}\;.
\end{equation}
Using Proposition~\ref{Prop. ineq}, we get
$$
\gamma<[0;1,1,c_3]\leq[0;1,1,1]=\frac{1}{1+\frac{1}{1+\frac{1}{1}}}=\frac{2}{3}\;.
$$
Therefore, the right hand side of \eqref{gap I} satisfies, with regard to $d\geq2$ and the assumption $a\alpha<2\pi^2/5$,
$$
\frac{\gamma}{d-1}\cdot\frac{a\alpha}{\pi^2}<\frac{2}{3}\cdot\frac{2}{5}=\frac{4}{15}\;.
$$
At the same time, according to Lemma~\ref{Levels 1/beta}, the left hand side of \eqref{gap I} is greater than $1/3$. 
Consequently, inequality~\eqref{gap I} never holds true; i.e, there are no spectral gaps having lower endpoints at $\left(\frac{m\pi}{\gamma a}\right)^2$ for $m\in\mathbb{N}$.

(ii)\; A gap with lower endpoint at $\left(\frac{m\pi}{a}\right)^2$ occurs if and only if
$$
\frac{2m\pi}{a}\left[\tan\left(\frac{\pi}{2}\left(m\frac{\gamma a}{a}-\left\lfloor m\frac{\gamma a}{a}\right\rfloor\right)\right)+(d-1)\tan\left(\frac{\pi}{2}\left(m\frac{a}{a}-\left\lfloor m\frac{a}{a}\right\rfloor\right)\right)\right]<\alpha,
$$
which is equivalent to
\begin{equation}\label{gap II}
\frac{2}{\pi}\tan\left(\frac{\pi}{2}\left(m\gamma-\lfloor m\gamma\rfloor\right)\right)<\frac{a\alpha}{\pi^2}.
\end{equation}
Lemma~\ref{Levels beta} gives the following estimates on the left hand side of~\eqref{gap II}:
\begin{equation*}
\begin{array}{lcl}
\mathrm{LHS}<\frac{4}{\pi}(2-\sqrt{3}) &\quad& \text{if $m=q_{2n}$ and $c_{2n+1}=2$}; \\[2pt]
\mathrm{LHS}>\frac{2}{5} &\quad& \text{otherwise}.
\end{array}
\end{equation*}
Regarding the right hand side of~\eqref{gap II}, the assumptions on $\alpha$ imply that
$$
\mathrm{RHS}=\frac{a\alpha}{\pi^2}\in\left[\frac{4}{\pi}(2-\sqrt{3}),\frac{2}{5}\right].
$$
Therefore, inequality~\eqref{gap II}, $\mathrm{LHS}<\mathrm{RHS}$, is satisfied if and only if $m=q_{2n}$ and $c_{2n+1}=2$. In other words, lower endpoints of spectral gaps are exactly the points $\left(\frac{q_{2n}\pi}{a}\right)^2$ such that $c_{2n+1}=2$.
\end{proof}

Theorem~\ref{Gap criterion} gives a method to construct a periodic quantum graph with any prescribed number of gaps, whose positions can be to some extent arranged in a desired manner.
Let us illustrate the result on an example. (N.B. The notation $\bar{1}$ in the continued fraction expansion represents an infinite sequence $1,1,1,\ldots$.)

\begin{example}{Example}\label{Example BS}
Let $a>0$ and $\alpha a\in[4\pi(2-\sqrt{3}),2\pi^2/5]$. Below we apply Theorem~\ref{Gap criterion} on three sample choices of $\gamma$. The number of gaps is given by the number of occurrences of $2$ in the continued fraction expansion of $\gamma$. The gaps begin at points $k^2=\left(\frac{q_{2n}\pi}{a}\right)^2$, where $n$ is given by the condition $c_{2n+1}=2$ and $p_{2n}/q_{2n}$ is the $2n$-th convergent of $\gamma$.
\begin{itemize}
\item If $\gamma=[0;1,1,2,\bar{1}]=(15-\sqrt{5})/22$, the spectrum has only one gap, whose lower endpoint is located at $(q_{2}\pi/a)^2$. Since the second convergent of $\gamma$ is $\frac{p_2}{q_2}=\frac{1}{1+\frac{1}{1}}=\frac{1}{2}$, the lower endpoint of the gap is $k^2=(q_{2}\pi/a)^2=4\pi^2/a^2$.
\item If $\gamma=[0;1,1,1,1,2,\bar{1}]=(99-\sqrt{5})/158$, the spectrum has again only one gap, whose lower endpoint is this time located at $(q_{4}\pi/a)^2$. The fourth convergent of $\gamma$ is $\frac{p_4}{q_4}=\frac{3}{5}$, hence $(q_{4}\pi/a)^2=25\pi^2/a^2$.
\item If $\gamma=[0;1,1,2,1,2,\bar{1}]=(209-\sqrt{5})/358$, the spectrum has exactly $2$ gaps, whose lower endpoints are located at $(q_{2}\pi/a)^2$ and $(q_{4}\pi/a)^2$. Since the $2$nd and the $4$th convergent of $\gamma$ are $\frac{p_2}{q_2}=\frac{1}{2}$ and $\frac{p_4}{q_4}=\frac{4}{7}$, the lower endpoints of the gaps are $4\pi^2/a^2$ and $49\pi^2/a^2$.
\end{itemize}
\end{example}

Note that Example~\ref{Example BS} gives a constructive proof of the existence of quantum graphs that are periodic in $d$ dimensions ($d\geq2$) and have a nonzero finite number of spectral gaps. In case of $d\geq3$ this is the first such example to date.

\begin{remark}{Remark}
In the main theorem of this section (Theorem~\ref{Gap criterion}), we restricted our attention to the special choice of $\gamma$, namely, $\gamma=[0;1,1,c_3,1,c_5,1,c_7,1,\ldots]$, where $c_{2n+1}\in\{1,2\}$. Such restriction is made only for the sake of simplicity, because it allowed us to keep the formulations and proofs of Lemmas~\ref{Levels beta} and \ref{Levels 1/beta} rather simple. But naturally, the idea presented here works generally for any $\gamma$. The procedure would be analogous: One would examine the quantities $q(q\gamma-\lfloor q\gamma\rfloor)$ (separately when $\lfloor q\gamma\rfloor/q$ is a convergent of $\gamma$ and when it is not a convergent), and similarly for $\gamma^{-1}$. Then one would properly estimate the tangent function in the expression $\frac{2}{\pi}q_{2n}\tan\left(\frac{\pi}{2}(q_{2n}\gamma-\lfloor q_{2n}\gamma\rfloor)\right)$ (similarly for $\gamma^{-1}$). In this way one finds ``levels'' that determine, when compared to the quantity $a\alpha/\pi^2$, the gaps in the spectrum. Roughly speaking, the number and positions of gaps are closely related (although not precisely coinciding) to the \emph{number of occurences and the positions of large terms in the continued fraction expansion of $\gamma$}.
\end{remark}

\section{Conclusions}

The results of this paper can be summarized in three points. The first one consists in a characterization of the positions of gaps in the spectrum of a quantum graph taking the form of a $d$-dimensional hyperrectangular lattice for a general $d$ ($d=3$ corresponds to a cuboidal lattice). Secondly, we found sufficient conditions on the parameters of the system guaranteeing that the number of spectral gaps is finite. The criteria rest upon the measure of irrationality of edge lengths ratios, which is expressed in terms of the function $\upsilon(\gamma)$. Thirdly, a connection between the number and positions of spectral gaps and the structure of the continued fraction expansion of the edge lengths ratio was found and explored.
This led to a precise specification of the number and locations of gaps in the spectrum of a lattice with two edge lengths related by ratio $\gamma$. Although we considered only a $\gamma$ having a special continued fraction expansion in order to keep the presentation simple enough, the idea is valid generally and its application to other values of $\gamma$ would be straightforward.

As a by-product, the results provide a constructive proof of the existence of quantum graphs that are periodic in $d$ dimensions for a general $d\geq3$ and have a nonzero finite number of gaps in their spectra. Together with the achievements of \cite{ET17} on graphs periodic in $2$ dimensions, the existence of periodic quantum graphs having a nonzero finite number of spectral gaps is now established in any dimension $d\geq2$.

An open question remains regarding graphs that are periodic in only one dimension, which take the form of an infinite periodic chain (so-called ``$\mathbb{Z}$-periodic graphs''~\cite[Sect.~4.6 and 4.7]{BK13}). It is not known whether a $\mathbb{Z}$-periodic graph having a nonzero finite number of spectral gaps exists. No example has appeared in the literature so far, but from a general result, we know that such a quantum graph $(\Gamma,H)$, if it exists, will have the following two properties \cite{ET17}:
\begin{itemize}
\item Some of the vertex couplings lie outside the scale-invariant class.
\item Let $(\Gamma,H_0)$ be obtained from the graph $(\Gamma,H)$ in question by replacing each vertex coupling with the so-called ``associated scale-invariant couplings'' (i.e., having the Robin component removed; see \cite[Def.~2.4]{ET17} for details). Then $H_0$ must have no gaps in the spectrum.
\end{itemize}

\end{document}